\newcommand{\ie}{i.\,e.,~}
\newcommand{\eg}{e.\,g.,~}
\newcommand{\wrt}{w.r.t.~}
\newcommand{\apos}[1]{``{#1}"}
\DeclareMathOperator*{\argmax}{arg\,max}
\DeclareMathOperator*{\median}{median}
\renewcommand{\|}{\;|\;}
\renewcommand{\vec}[1]{\mathbf{#1}}
\newcommand{\N}{\mathbf{N}}
\newcommand{\R}{\mathbf{R}}
\renewcommand{\d}[1]{\textnormal{d}{#1}}
\newcommand{\Xset}{\mathcal{X}}
\newcommand{\x}{\vec{x}}
\newcommand{\f}{\vec{f}}
\renewcommand{\i}{\vec{i}}
\newcommand{\tr}{^\top}
\newcommand{\inv}{^{-1}}
\newcommand{\norm}[1]{\lVert{#1}\rVert}
\newcommand{\abs}[1]{\lvert{#1}\rvert}
\newcommand{\defas}{\coloneqq}
\newcommand{\asdef}{\eqqcolon}
\newcommand{\diag}[1]{\textnormal{diag}(#1)}
\newcommand{\dom}[1]{\textnormal{dom}({#1})}
\newcommand{\E}[1]{\mathbf{E}[#1]}
\newcommand{\Var}[1]{\mathbf{V}\textnormal{ar}(#1)}
\newcommand{\Normdist}[2]{\mathcal{N}({#1},{#2})}
\newcommand{\iid}{\textnormal{i.\,i.\,d.}~}
\newcommand{\eps}{\varepsilon}
\newcommand{\diffAbs}{\Delta^\textnormal{abs}}
\newcommand{\diffRel}{\Delta^\textnormal{rel}}
\newcommand{\IntMeas}{\hat{i}}
\newcommand{\A}{\mathcal{A}}
\newcommand{\Q}{\vec{Q}}
\newcommand{\q}{\vec{q}}
\newcommand{\en}{\omega}
\newcommand{\bb}{\vec{b}}
\newcommand{\CostCount}{c_{\textnormal{count}}}
\newcommand{\CostAxesMov}{c_{\textnormal{axes}}}
\newcommand{\vangvelo}{\mathbf{v}}  % angular velocities
\newcommand{\ResolFunc}{\varphi}
\newcommand{\AngleMap}{\vec{\Psi}}
\newcommand{\IntThresh}{\tau}
\newcommand{\BackgrLevel}{\gamma}
\newcommand{\Tcount}{T_{\textnormal{count}}}
\newcommand{\IntThreshBASE}{{\tau^*}}
\newcommand{\acq}{\textnormal{acq}}
\newcommand{\obj}{\phi}
\newcommand{\BackgrLevelDiffRelMax}{\Delta^{\textnormal{rel}}_{\max}}
\newcommand{\BackgrLevelDiffAbsMin}{\Delta^{\textnormal{abs}}_{\min}}
\newcommand{\BackgrLevelDecileMax}{l_{\max}}
\newcommand{\IntThreshFact}{\beta}
\newcommand{\cubeHyperparam}{\mathcal{H}}
\newcommand{\numKernelRestart}{N_{\cubeHyperparam}}
\newcommand{\stopCritKO}{P_{\textnormal{KO}}}
\newcommand{\numKO}{N_{\textnormal{KO}}}
\newcommand{\numKOmin}{N_{\textnormal{KO}}^{\min}}
\newcommand{\numKOmax}{N_{\textnormal{KO}}^{\max}}
\newcommand{\numLastKO}{k_{\textnormal{KO}}}
\newcommand{\epsKO}{\eps_{\textnormal{KO}}}
\newcommand{\DegenFactMin}{\delta^{-}}
\newcommand{\DegenFactMax}{\delta^{+}}
\newcommand{\Ell}{\mathcal{E}}
\newcommand{\Nrow}{N_{\textnormal{row}}}
\newcommand{\gpcam}{\textsf{gpCAM}\xspace}
\theoremstyle{remark}
\newtheorem{definition}{Definition}
\newtheorem{lemma}{Lemma}
\newtheorem{proposition}{Proposition}
\renewcommand*{\@fnsymbol}[1]{\ensuremath{\ifcase#1\or \dagger\or \ddagger\fi}}
\title{Active learning-assisted neutron spectroscopy\\with log-Gaussian processes}
\author{M.~Teixeira Parente$^{1,}$\thanks{Corresponding author 1: Lichtenbergstraße 1, 85748 Garching, Germany, \href{mailto:m.teixeira.parente@fz-juelich.de}{\texttt{m.teixeira.parente@fz-juelich.de}}.}, G. Brandl$^{1}$, C. Franz$^{1}$, U. Stuhr$^{2}$,\\[0.2em] M. Ganeva$^{1,}$\thanks{Corresponding author 2: Wilhelm-Johnen-Straße, 52428 Jülich, Germany, \href{mailto:m.ganeva@fz-juelich.de}{\texttt{m.ganeva@fz-juelich.de}}.}, and A. Schneidewind$^{1}$}
\date{\footnotesize
	$^{1}$Jülich Centre for Neutron Science (JCNS) at Heinz Maier-Leibnitz Zentrum (MLZ),\\ Forschungszentrum Jülich, Garching, Germany \\
	$^{2}$Laboratory for Neutron Scattering and Imaging, Paul Scherrer Institute (PSI), Villigen, Switzerland
}
\begin{document}
\maketitle
\RestyleAlgo{boxruled}  % box around algorithm
\LinesNumbered  % line numbers in algorithm
\SetAlgoCaptionSeparator{ $\mid$}  % separator symbol in caption

\begin{abstract}
Neutron scattering experiments at three-axes spectrometers (TAS) investigate magnetic and lattice excitations by measuring intensity distributions to understand the origins of materials properties.
The high demand and limited availability of beam time for TAS experiments however raise the natural question whether we can improve their efficiency and make better use of the experimenter's time.
In fact, there are a number of scientific problems that require searching for signals, which may be time consuming and inefficient if done manually due to measurements in uninformative regions.
Here, we describe a probabilistic active learning approach that not only runs autonomously, \ie without human interference, but can also directly provide locations for informative measurements in a mathematically sound and methodologically robust way by exploiting log-Gaussian processes.
Ultimately, the resulting benefits can be demonstrated on a real TAS experiment and a benchmark including numerous different excitations.
\end{abstract}

%%%%%%%%%%%%%%%%%%%%%%%%%%%%%%%%%%%%%%%%%%%%%%%%%%%%%%%%%%%%%%%%%%%%%%%%%%%%%%%%%%%%%%%%
\phantomsection
\section*{Introduction}
\addcontentsline{toc}{section}{Introduction}
%%%%%%%%%%%%%%%%%%%%%%%%%%%%%%%%%%%%%%%%%%%%%%%%%%%%%%%%%%%%%%%%%%%%%%%%%%%%%%%%%%%%%%%%
Neutron three-axes spectrometers~(TAS)~\cite{shirane2002neutron} enable understanding the origins of materials properties through detailed studies of magnetic and lattice excitations in a sample.
Developed in the middle of the last century, the technique remains one of the most significant in fundamental materials research and was therefore awarded the Nobel prize in~1994~\cite{nobelprize1994physics}.
It is used for investigating the most interesting and exciting phenomena of their time: the cause of different crystal structures in iron~\cite{neuhaus2014role}, unconventional superconductors~\cite{song2016robust}, quantum spin glasses~\cite{zhen2018spin}, quantum spin liquids~\cite{yuesheng2019rearrangement}, and non-trivial magnetic structures~\cite{weber2022topological}.

TAS are globally operated at neutron sources of large-scale research facilities and measure scattering intensity distributions in the material's four-dimensional $\Q$-$E$~space, \ie in its momentum space~($\Q$) for different energy transfers~($E$)~\cite{sivia2011elementary}, by counting scattered neutrons on a single detector.
However, high demand and limited availability make beam time at TAS a valuable resource for experimenters.
Since, furthermore, the intensity distributions of the aforementioned excitations have an information density that strongly varies over $\Q$-$E$ space due to the direction of the underlying interactions and the symmetry of the crystal, and TAS measure sequentially at single locations in $\Q$-$E$ space, it is natural to think about if and how we can improve the efficiency of TAS experiments and make better use of the experimenter's time.

In experimental workflows at TAS, there are scenarios where the intensity distribution to be measured is not known in advance and therefore a rapid overview of the same in a particular region of $\Q$-$E$~space is required.
So far, experimenters then decide manually how to organize these measurements in detail.
In this mode, however, there is a possibility depending on the specific scenario that measurements do not provide any further information and thus waste beam time since they are placed in the so-called background, \ie regions with either no or parasitic signal.
If there were computational approaches that autonomously, \ie without human interference, place measurements mainly in regions of signal instead of background in order to acquire more information on the intensity distribution in less time, not only the use of beam time gets optimized, but also the experimenters can focus on other relevant tasks in the meantime.

The potential of autonomous approaches for data acquisition was recognized throughout the scattering community in recent years.
\gpcam, for example, is an approach that is also based on GPR and applicable to any scenario in which users can specify a reasonable acquisition function to determine locations of next measurements~\cite{noack2019kriging,noack2020autonomous}.
It was originally demonstrated on small-angle X-ray scattering (SAXS) and grazing-incidence small-angle X-ray scattering (GISAXS) applications.
However, it was also applied to a TAS setting recently~\cite{noack2021gaussian}.
In reflectometry, experiments can be optimized by placing measurements at locations of maximum information gain using Fisher information~\cite{durant2021determining,durant2022optimizing}.
Furthermore, the maximum a posteriori (MAP) estimator of a quantity of interest in a Bayesian setting~\cite{bolstad2016introduction,van2021bayesian} can be used to accelerate small-angle neutron scattering (SANS) experiments~\cite{kanazawa2019accelerating}.
Moreover, neutron diffraction experiments are shown to benefit from active learning by incorporating prior scientific knowledge of neutron scattering and magnetic physics into GPR for an on-the-fly interpolation and extrapolation~\cite{mcdannald2022fly}.
Materials synthesis and materials discovery were also considered as potential fields of application~\cite{ament2021autonomous,kusne2020fly}.
Interestingly, log-Gaussian (Cox) processes~\cite{moller1998log}, the technique that we use as a basis for our approach, are applied in various domains such as epidemiology~\cite{vanhatalo2007sparse}, insurance~\cite{basu2002cox}, geostatistics~\cite{diggle2013spatial}, or forestry~\cite{serra2014spatio,heikkinen1999modeling}.
The common factor of all those applications is the possibility to model or interpret their corresponding quantity of interest as an intensity.

From the field of artificial intelligence and machine learning, \textit{active learning}~\cite{settles2012active,cohn1996active,settles2009active} (also called \textit{optimal experimental design} in statistics literature) provides a general approach that can be taken into account for the task of autonomous data acquisition in TAS experiments.
In our context, an active learning approach sequentially collects intensity data while deciding autonomously where to place the next measurement.
In other words, it updates its own source of information that its further decisions are based on.

The approach that we describe in this work regards an intensity distribution as a non-negative function over the domain of investigation and approximates it probabilistically by the mean function of a stochastic process.
The primary objective for our choice of a stochastic process is that its posterior approximation uncertainties, \ie its standard deviations after incorporating collected data, are largest in regions of signal as this enables us to identify them directly by maximizing the uncertainty as an acquisition function.
More concretely, we fit a log-Gaussian process to the intensity observations, \ie we apply \textit{Gaussian process regression} (GPR)~\cite{rasmussen2006gaussian} to logarithmic intensities and exponentiate the resulting posterior process.
The fact that GPR is a Bayesian technique that uses pointwise normal distributions to fit noisy data from an underlying inaccessible function of interest makes it indeed an interesting candidate for many applications and use cases where it is important to quantify some sort of uncertainty.
However, we will see that, in our case, it is the logarithmic transformation of observed intensities that leads to large uncertainties in regions of signal and is thus the central element of our approach.
Nevertheless, this approach is not only able to detect regions with strong signals, but also those with weak signals.
A threshold parameter for intensity values and a background level, both estimated by statistics of an initial set of intensity observations on a particular grid, can control which signals are subject to be detected or neglected.
Furthermore, costs for changing the measurement location caused by moving the axes of the instrument are respected as well.

Since \gpcam is the only of the mentioned approaches that was already applied to TAS~\cite{noack2021gaussian}, it is possible to briefly contrast it with ours.
For the TAS setting, we see two major differences.
First, \gpcam approximates the original intensity function (instead of its logarithm) which violates a formal assumption of GPR.
In fact, GPR assumes that the function of interest is a realization of a Gaussian process.
Since normal distributions have support on the entire real line, their realizations can, with positive probability, take negative values which is not possible for non-negative intensity functions.
This issue does not occur in our methodology because we approximate the logarithm of the intensity function with GPR.
Secondly, for identifying regions of signal, \gpcam requires users to specify an acquisition function based on a GPR approximation which can be problematic.
Indeed, especially at the beginning of an experiment, having no or not much information about the intensity distribution, it can be difficult to find a reasonable acquisition function which leads to an efficient experiment.
Moreover, even if users were successful in finding an acquisition function that works for a particular experiment, there is no guarantee that it is also applicable for another experiment with a different setting.
Additionally, the acquisition function used in the only TAS experiment of \gpcam so far~\cite[Eq.~(6)]{noack2021gaussian} makes it risk losing some of its interpretability~\cite{roscher2020explainable,belle2021principles} in the TAS setting, because physical units do not match and it is not obvious from a physics point of view why this function in particular is suitable for the goal of placing measurement points in regions of signal.
In contrast, we are able to choose a particular canonical acquisition function that remains the same for each experiment since we identify regions of signal by the methodology itself or, more concretely, by the logarithmic transformation of a Gaussian process, and not primarily by an acquisition function as the critical component.
Furthermore, the mentioned parameters of our approach, the intensity threshold and the background level, are both scalar values with, and not functions without a physical meaning.
They are thus directly interpretable and can be estimated using initial measurements in a provably robust manner or set manually.

In this work, we demonstrate the applicability and benefits of our approach in two different ways.
First, we present outcomes of a real neutron experiment performed at the thermal TAS EIGER~\cite{stuhr2017thermal} at the continuous spallation source~SINQ of Paul Scherrer Institute (PSI) in Villigen, Switzerland.
In particular, we compare with a grid approach and investigate the robustness of the results \wrt changes in the estimated background level and intensity threshold.
It can be seen that our approach robustly identifies regions of signal, even those of small shape, and hence is able to improve the efficiency of this experiment.
Moreover, we challenge our approach with a difficult initial setting and can demonstrate that its behaviour remains reasonable.
Secondly, we apply a benchmark with several synthetic intensity functions and make fair comparisons with two competing approaches: an approach that places measurements uniformly at random and again a grid-based approach.
In this setting, efficiency is measured by the time to reduce a relative weighted error in approximating the target intensity functions.
The results show that our approach significantly improves efficiency for most intensity functions compared to the random and grid approach and is at least as efficient for the remainder.
In addition, we can show that the results of our approach are robust \wrt changes in the intensity threshold parameter.
Finally, we provide a comment to a comparison with \gpcam in this benchmark setting and a corresponding reference to the Supplementary Information.

%%%%%%%%%%%%%%%%%%%%%%%%%%%%%%%%%%%%%%%%%%%%%%%%%%%%%%%%%%%%%%%%%%%%%%%%%%%%%%%%%%%%%%%%
\phantomsection
\section*{Results}
\addcontentsline{toc}{section}{Results}
%%%%%%%%%%%%%%%%%%%%%%%%%%%%%%%%%%%%%%%%%%%%%%%%%%%%%%%%%%%%%%%%%%%%%%%%%%%%%%%%%%%%%%%%
%%%%%%%%%%%%%%%%%%%%%%%%%%%%%%%%%%%%%%%%%%%%
\phantomsection
\subsection*{Problem formulation}
\addcontentsline{toc}{subsection}{Problem formulation}
%%%%%%%%%%%%%%%%%%%%%%%%%%%%%%%%%%%%%%%%%%%%
From a methodological perspective, we aim to discover an intensity function~$i:\Xset\to[0,\infty)$ on a rectangular set~$\Xset\subseteq\R^n$, $n\in\N$, with coordinates on a certain hyperplane in four-dimensional $\Q$-$E$~space~\cite{sivia2011elementary}.
As an example, Fig.~\ref{fig:problem}a displays an intensity function defined on $\Xset=[2.3,3.3]\times[2.5,5.5]\subseteq\R^2$, where the two-dimensional hyperplane is spanned by the vector~$(0,0,1)$ in $\Q$~space with offset~$(1,1,0)$ and energy transfer~($E$).
\begin{figure}
    \centering
    \includegraphics[width=\linewidth]{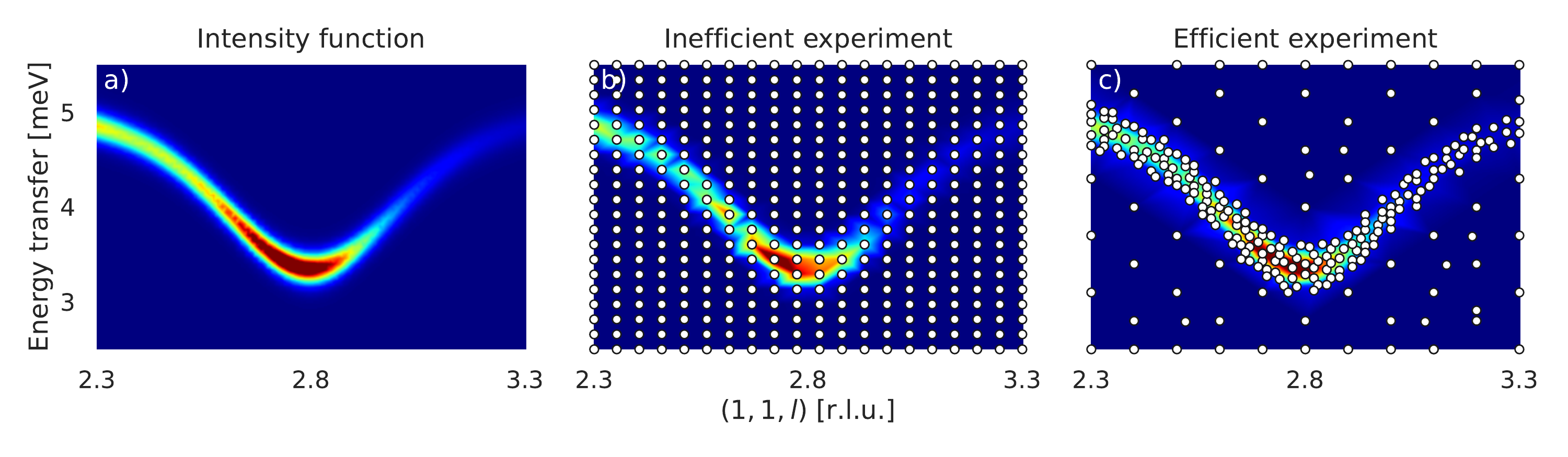}
    \caption{Example of an intensity function and a corresponding inefficient and efficient experiment.
        For the $\Q$~direction, we use relative lattice units (r.l.u.).
        a) Intensity function on~$\Xset=[2.3,3.3]\times[2.5,5.5]$ along the $\Q$ direction~$(0,0,1)$ with offset~$(1,1,0)$ and energy transfer.
        The color spectrum ranges from blue (low signal) to red (high signal).
        b) Inefficient experiment with a large part of measurement locations (dots) in the background (dark blue area).
        c) More efficient experiment with most measurement locations in the region of signal.}
    \label{fig:problem}
\end{figure}
For directions in $\Q$~space, we use relative lattice units (r.l.u.), while energy transfer~$E$ is measured in milli-electron volts~(meV).
Note that, due to restrictions of an instrument, the intensity function might only be defined on a subset~$\Xset^*\subseteq\Xset$ consisting of all measurement locations reachable.
For more details on the TAS setting, we refer to Supplementary Note~\ref{si-sec:setting_tas}.

The intensity function~$i$ is accessed by counting scattered neutrons on a detector device for a finite number of measurement locations~$\x\in\Xset^*$ yielding noisy observations~$I^+(\x)\sim\text{Pois}(\lambda=i(\x))$.
Note that detector counts are usually normalized by neutron counts on a monitor device, \ie the corresponding unit of measurement is $\textnormal{detector counts}/(M\textnormal{ monitor counts})$, $M\in\N$.
Since Poisson distributions~$\text{Pois}(\lambda)$ with a sufficiently large parameter~$\lambda>0$ can be approximated by a normal distribution~$\Normdist{\lambda}{\lambda}$, we assume that
\begin{equation}
    \label{eq:intensity_noise_physics}
    I^+(\x) = i(\x) + \sqrt{i(\x)}\eta^+,
\end{equation}
where~$\eta^+\sim\Normdist{0}{1}$.

In the following, we repeatedly refer to \apos{experiments} which are defined as a sequential collection of intensity observations.
\begin{definition}[Experiment]
    An \textit{experiment}~$\A$ is an $N$-tuple of location-intensity pairs
    \begin{equation}
        \label{eq:def_experim}
        \A = ((\x_1,\IntMeas_1),\ldots,(\x_N,\IntMeas_N))
    \end{equation}
    where~$\abs{\A} \defas N\in\N$ denotes the number of measurement points, $\x_j\in\Xset^*$ are measurement locations, and~$\IntMeas_j$ are corresponding noisy observations of an intensity function~$i$ at~$\x_j$.
\end{definition}

Since demand for beam time at TAS is high but availability is limited, the goal of an approach is to perform the most informative experiments at the lowest possible cost.
Beam time use can, for example, be optimized when excessive counting in uninformative regions like the background (Fig.~\ref{fig:problem}b) is avoided but focused on regions of signal (Fig.~\ref{fig:problem}c).

We quantify the benefit of an experiment~$\A$ by a benefit measure~$\mu=\mu(\A)\in\R$ and its cost by a cost measure~$c=c(\A)\in\R$.
For now, it suffices to mention that cost is measured by experimental time (the time used for an experiment) and benefit is defined by reducing a relative weighted error between the target intensity function and a corresponding approximation constructed by the collected intensity observations.
Note that, quantifying benefits this way, their computation is only possible in a synthetic setting with known intensity functions (as in our benchmark setting).
Real neutron experiments do not meet this requirement and thus must be evaluated in a more qualitative way.

In our setting, an approach attempts to conduct an experiment~$\A$ with highest possible benefit using a given cost budget~$C\geq0$, \ie it aims to maximize~$\mu(\A)$ while ensuring that $c(\A)\leq C$.
The steps of a corresponding general experiment are given in Box~\ref{alg:general}.
\begin{algorithm}
    \caption{General experiment algorithm}
    \label{alg:general}
    \DontPrintSemicolon
    \KwData{cost measure~$c$, cost budget~$C\geq0$}
    $\A \gets ()$ \;
    $J = 0$\;
    \While{$c(\A) < C$}{
        Determine next measurement location~$\x_{J+1}\in\Xset^*$\; \label{step:next_loc}
        Observe noisy intensity~$\IntMeas_{J+1}$ at~$\x_{J+1}$ \;
        $\A \gets ((\x_1,\IntMeas_1), \ldots, (\x_J,\IntMeas_J), (\x_{J+1},\IntMeas_{J+1}))$ \;
        $J \gets J+1$ \;
    }
\end{algorithm}
Line~\ref{step:next_loc} is most important and crucial for both cost and benefit of the experiment since it decides where to observe intensities, \ie count neutrons, next.

From an algorithmic perspective, if we denote the current step of an experiment by~$J\in\N$, our approach implements the decision for the next measurement location~$\x_{J+1}\in\Xset^*$ by maximizing an objective function $\phi_J:\Xset^*\to\R$.
It balances an acquisition function $\acq_J:\Xset^*\to\R$ and a cost function $c_J:\Xset^*\to\R$ that both depend on the current step~$J$ and hence can change from step to step.
The acquisition function~$\acq_J$ indicates the value of any~$\x\in\Xset^*$ for improving the benefit of the experiment whereas the cost function~$c_J$ quantifies the costs of moving the instrument axes from the current location~$\x_J\in\Xset^*$ to~$\x$.
The metric~$d:\Xset^*\times\Xset^*\to[0,\infty)$ used for our particular cost function
\begin{equation}
    \label{eq:def_cost_fct}
    c_J(\x) \defas d(\x_J,\x)
\end{equation}
is formally specified in Supplementary Note~\ref{si-sec:setting_tas} (Eq.~\eqref{eq:def_metric}).

Our methodology concentrates on developing a useful acquisition function as the crucial component of our approach making it straightforward to find regions of signal.
A schematic representation with general and specific components of our approach in the context of the general experiment algorithm (Box~\ref{alg:general}) can be found in Fig.~\ref{fig:framework}.
\begin{figure}
    \centering
    \includegraphics[width=1.0\linewidth]{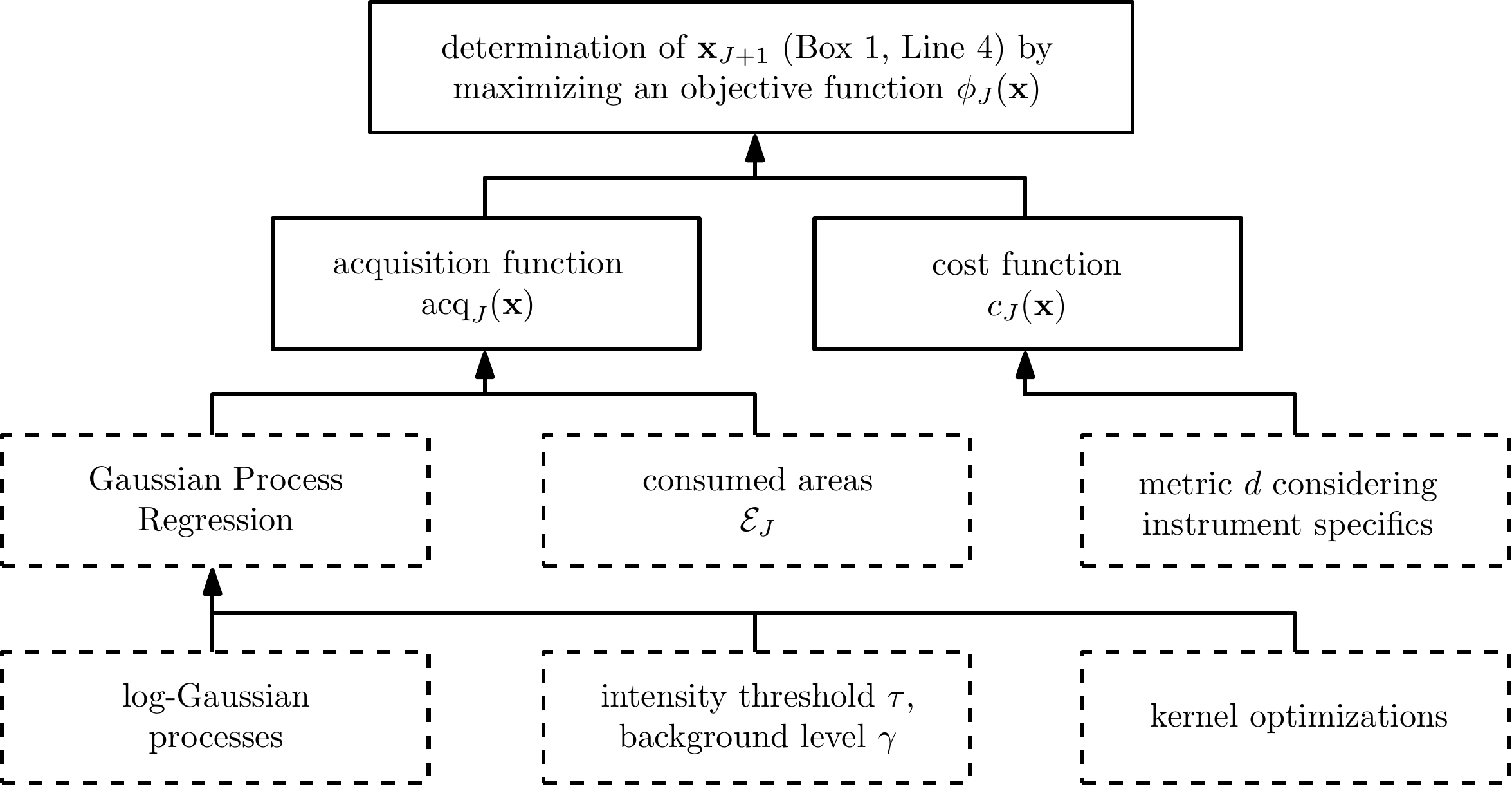}
    \caption{Schematic representation of our approach in the context of the general experiment algorithm.
        The general components (solid line boxes) are composed of the specific components (dashed line boxes) which mainly form our methodology.}
    \label{fig:framework}
\end{figure}
Specific components necessary to understand the experimental results are informally introduced below.
Details for these specific components as well as our final algorithm (Box~\ref{alg:final}), concretizing the general experiment algorithm, along with all its parameters and their particular values used are specified in the Methods section.

%%%%%%%%%%%%%%%%%%%%%%%%%%%%%%%%%%%%%%%%%%%%
\phantomsection
\subsection*{Log-Gaussian processes for TAS}
\addcontentsline{toc}{subsection}{Log-Gaussian processes for TAS}
%%%%%%%%%%%%%%%%%%%%%%%%%%%%%%%%%%%%%%%%%%%%
We briefly describe here why log-Gaussian processes, our central methodological component, are suitable to identify regions of signal in a TAS experiment and thus used to specify a reasonable acquisition function~$\acq_J$.
Methodological details can be found in the Methods section.

Although the intensity function is not directly observable due to measurement noise (Eq.~\eqref{eq:intensity_noise_physics}), we aim to approximate it by the mean function of a log-Gaussian process
\begin{equation}
    I(\x) \defas \exp(F(\x)),
\end{equation}
where~$F$ is a Gaussian process.
That is, after~$J$ steps, we fit logarithmic intensity observations to~$F$ yielding its posterior mean and variance function denoted by~$m_J$ and~$\sigma_J^2$, respectively.
The acquisition function is then defined as the uncertainty of~$I$ given by its posterior standard deviation, \ie
\begin{equation}
    \label{eq:def_acq_fct_simple}
    \acq_J(\x) = \sqrt{(\exp(\sigma_J^2(\x))-1) \cdot (\exp(2m_J(\x)+\sigma_J^2(\x)))}.
\end{equation}
Observe the crucial detail that~$m_J$ appears exponentially in this function which is the main reason why our approach is based on log-Gaussian processes.
A posterior log-Gaussian process is thus able to find regions of signal just through maximizing its uncertainty.
As illustration, regard Fig.~\ref{fig:gp_loggp} displaying the posterior of the Gaussian process~$F$ together with logarithmic intensity observations (Fig.~\ref{fig:gp_loggp}a) and the corresponding posterior of the log-Gaussian process~$I$ (Fig.~\ref{fig:gp_loggp}b).
\begin{figure}
    \centering
    \includegraphics[width=0.75\linewidth]{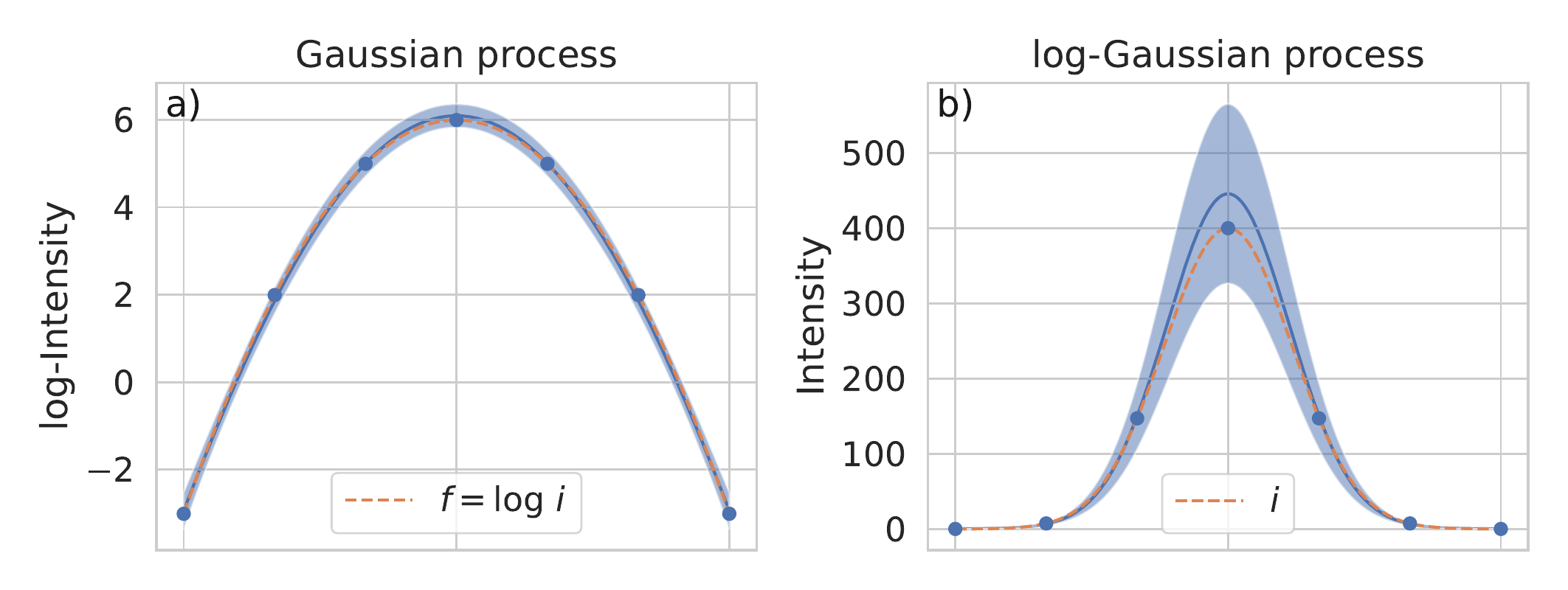}
    \caption{Transformation of a Gaussian process to a corresponding log-Gaussian process.
        The one-dimensional example intensity function~$i$ and its logarithm are displayed with dashed orange lines.
        a) Gaussian process with observations (blue dots) of~$\log i$ and its uncertainties (light blue area) around its mean function (solid blue line).
        b) Corresponding log-Gaussian process.
        As expected from Eq.~\eqref{eq:def_acq_fct_simple}, maximizing the uncertainty of the log-Gaussian process enables to find regions of signal.}
    \label{fig:gp_loggp}
\end{figure}

%%%%%%%%%%%%%%%%%%%%%%%%%%%%%%%%%%%%%%%%%%%%
\phantomsection
\subsection*{Intensity threshold and background level}
\addcontentsline{toc}{subsection}{Intensity threshold and background level}
%%%%%%%%%%%%%%%%%%%%%%%%%%%%%%%%%%%%%%%%%%%%
Maximizing the acquisition function from Eq.~\eqref{eq:def_acq_fct_simple} prioritizes regions with high intensities over regions with low intensities.
This poses a problem when there are multiple signal regions with intensities of different magnitudes (Supplementary~Fig.~\ref{fig:intens_thresh}a).
Indeed, measurement points are mainly placed in regions with higher intensities whereas regions with less signal are neglected (Supplementary~Fig.~\ref{fig:intens_thresh}b).
In TAS, we are interested in each region of signal no matter of which intensity magnitude.
We compensate for this potential problem by introducing an intensity threshold~$\IntThresh>0$ for observed intensities.
That is, we truncate the observed intensities to a maximum of~$\IntThresh$ before fitting (Supplementary~Fig.~\ref{fig:intens_thresh}c).
Consequently, measurement points get more evenly distributed among all signal regions (Supplementary~Fig.~\ref{fig:intens_thresh}d) since their placement is not biased due to large differences in their intensity values.

As another problem, intensity observations, in neutron experiments, contain background which is not part of the actual signal, \ie even if there is no actual signal at a certain location, we might nonetheless observe a positive intensity there.
If our approach does not compensate for regions of background, it might not recognize them as parasitic and hence consider them as regions of weak signal which potentially yields uninformative measurement points being placed there.
Therefore, we subtract a background level~$\BackgrLevel\in[0,\IntThresh)$ from already threshold-adjusted intensity observations while ensuring a non-negative value.

In an actual experiment, both, the intensity threshold and the background level, are estimated by statistics of initial measurement points which is described in more detail in the Methods section.
The estimation of the intensity threshold however depends on a parameter~$\IntThreshFact\in(0,1]$ controlling the distinction between regions of strong and weak signals~(Eq.~\eqref{eq:intens_thresh_param}) that needs to be set before starting an experiment.
This parameter is already mentioned here as we examine its impact on our results in the benchmark setting described below.

%%%%%%%%%%%%%%%%%%%%%%%%%%%%%%%%%%%%%%%%%%%%
\phantomsection
\subsection*{Neutron experiment}
\addcontentsline{toc}{subsection}{Neutron experiment}
%%%%%%%%%%%%%%%%%%%%%%%%%%%%%%%%%%%%%%%%%%%%
At SINQ~(PSI), we investigated a sample of SnTe (tin telluride) in a real neutron experiment performed at the thermal TAS EIGER~\cite{stuhr2017thermal}.
Our general aim is to reproduce known results from~\cite[Fig.~1b]{li2014phonon} using our approach.
Furthermore, we assess the robustness of the experimental result \wrt changes in the parameters for the background level and the intensity threshold (scenario~1) as well as challenge our approach using a coarser initialization grid on a modified domain~$\Xset$ with no initial measurement locations directly lying in a region of signal (scenario~2).

The software implementation of our approach communicates with the instrument control system NICOS (\href{https://www.nicos-controls.org/}{\nolinkurl{nicos-controls.org}}) which was configured on site.
For the experimental setting at the instrument, we refer to Supplementary Note~\ref{si-sec:setup_neutron_experim}.

As mentioned, the benefit measure used for benchmarking, involving a known target intensity function, is not computable in this setting of a neutron experiment due to experimental artefacts like background and noise.
We therefore evaluate the results of our approach in a more qualitative way for this experiment.
Also, although the costs for moving the instrument axes contribute to the total experimental time, we do not consider them here for optimizing the objective function since they are approximately constant across the domain~$\Xset$.

For scenario~1, we adopt the setting from the original results~\cite[Fig.~1b]{li2014phonon} which, in our context, means to investigate intensities on~$\Xset=[0,2]\times[2,12.4]$ along the vector~$(1,1,0)$ in $\Q$~space with offset~$(0,0,3)$ and energy transfer.
Initially, we performed measurements in a conventional mode for reference, \ie we mapped~$\Xset$ with a grid of 11~columns containing 27~measurement points each (bottom row in Fig.~\ref{fig:eiger_scen1_compar}).
\begin{figure}
    \centering
    \includegraphics[width=\linewidth]{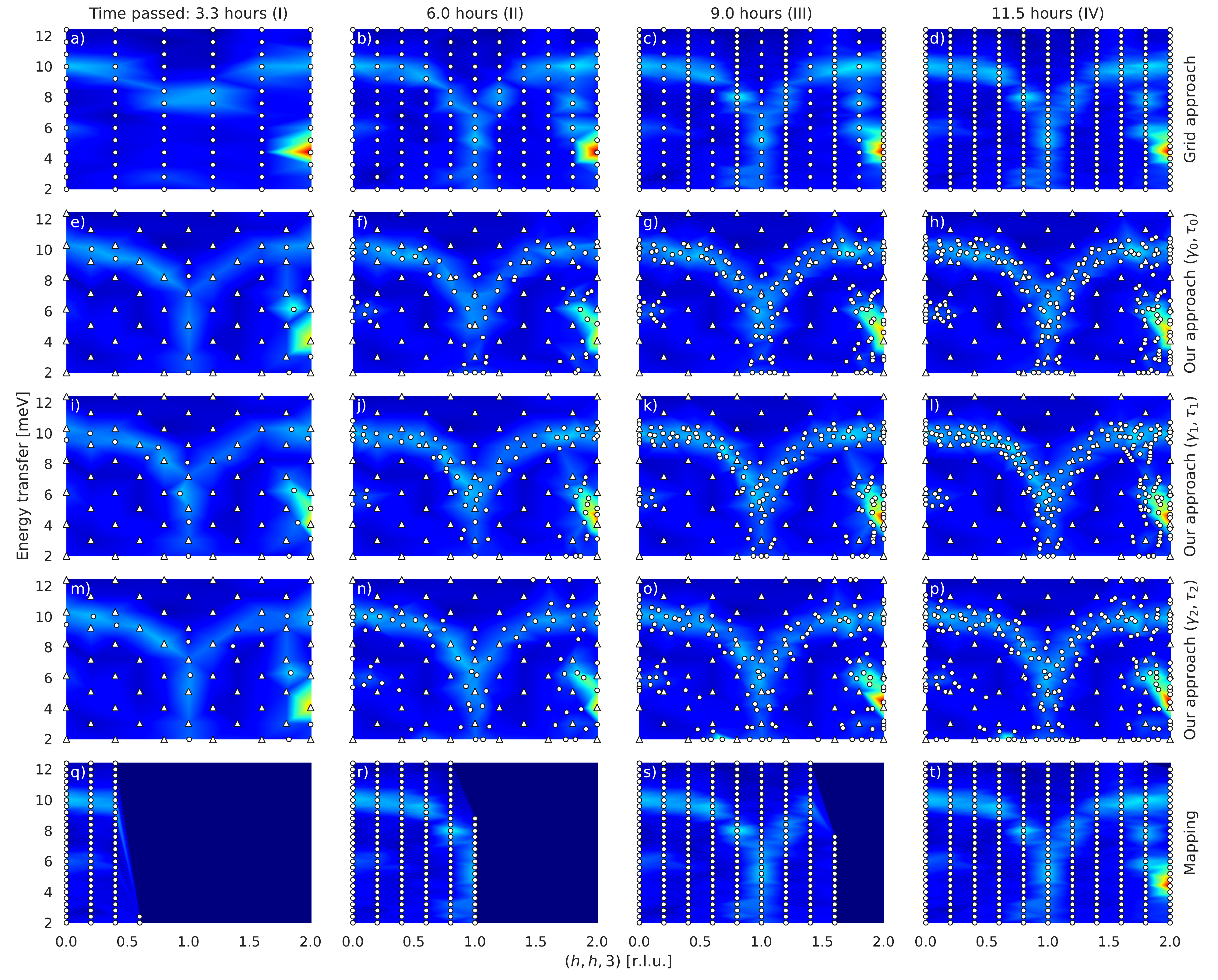}
    \caption{Results for scenario~1.
        For the $\Q$~direction, we use relative lattice units (r.l.u.).
        Each row displays the results of a certain approach.
        The columns indicate the four stages (I-IV) of the grid approach in the top row (a-d).
        Rows~2-4 correspond to results of our approach in three different settings.
        Triangles represent the initialization grid and dots show locations of intensity observations autonomously placed by our approach.
        Row~2 (e-h) corresponds to the default setting ($\BackgrLevel_0=58$ and~$\IntThresh_0=94.5$ were computed automatically).
        Manually changing the intensity threshold to~$\IntThresh_1=130$ (with~$\BackgrLevel_1=\BackgrLevel_0$) leads to results depicted in row~3 (i-l).
        Row~4 (m-p) shows results after changing the background level to $\BackgrLevel_2=30$ (with~$\IntThresh_2=\IntThresh_0$).
        The bottom row (q-t) shows the mapping in its conventional order for completeness.}
    \label{fig:eiger_scen1_compar}
\end{figure}
These measurements were arranged in columns from bottom to top, \ie from low to high energies, and from left to right, \ie from small to large values for the coordinate along the vector in $\Q$~space.

A direct comparison with our approach would put this mapping mode in disadvantage since its total experimental time could have been spent more efficiently.
For this reason, the approach that we eventually take for comparison, the grid approach (top row in Fig.~\ref{fig:eiger_scen1_compar}), takes the measurement points from the mapping mode but changes their order into four stages I-IV (Supplementary~Fig.~\ref{fig:base_grid_experiment_path}a-d).
The first stage is from bottom to top and from left to right but only consists of every other grid point on both axes.
The second stage is again from bottom to top but from right to left and also consists only of every other grid point on both axes.
The third and fourth stage are analogous but consist of the remaining grid points that were skipped in the first two stages.
With this order, the grid approach can observe intensities on each region of~$\Xset^*$ already after the first stage and hence can acquire more information in a faster way as with the conventional order.

We ran our approach in the default setting (Table~\ref{tab:parameter_values_default}), \ie with automatically computed background level and intensity threshold, as well as with two alternative pairs of corresponding parameter values manually set in order to study the robustness of the results.
In the default setting (row~2 in Fig.~\ref{fig:eiger_scen1_compar}), the background level and intensity threshold were computed to~$\BackgrLevel_0=58$ and~$\IntThresh_0=94.5$, respectively.
To study the robustness of these results \wrt changes in the intensity threshold, the parameters for the first alternative (row~3 in Fig.~\ref{fig:eiger_scen1_compar}) were set to~$\BackgrLevel_1=\BackgrLevel_0$ and~$\IntThresh_1=130$.
The second alternative (row~4 in Fig.~\ref{fig:eiger_scen1_compar}), for studying the robustness \wrt changes in the background level, was configured with~$\BackgrLevel_2=30$ and~$\IntThresh_2=\IntThresh_0$.
Eventually, the cost budget~$C$ for each instance of our approach was determined by the total experimental time of the grid approach which was $\sim11.5$~hours.
Each respective initialization with the same grid of 61~measurement points took $\sim2.9$~hours of experimental time.
The results displayed in Fig.~\ref{fig:eiger_scen1_compar} show that, after initialization, our approach places measurement points mainly in regions of signal in all three settings.
The grid approach, in contrast, has observed intensities at a considerable amount of uninformative measurement locations.

In scenario~2, we tried to challenge our approach with a more difficult initial setting.
The initialization grid in scenario~1 (triangles in Fig.~\ref{fig:eiger_scen1_compar}) indeed beneficially lies on an axis of symmetry of the intensity function.
Also, partly as a consequence, some initial measurements points are already located in regions of signal.
From a methodological perspective, it is interesting to investigate how our approach behaves after unfavorable initialization, \ie if almost all initial intensity observations are lying in the background and hence almost no useful initial information is available.
Hence, we reduced the number of initial measurement points from~61 to~25 and modified the domain to~$\Xset=[0.2,2]\times[2,16]$ to break the axis of symmetry and measure larger energy transfers~(Fig.~\ref{fig:eiger_scen2}a).
\begin{figure}
    \centering
    \includegraphics[width=0.75\linewidth]{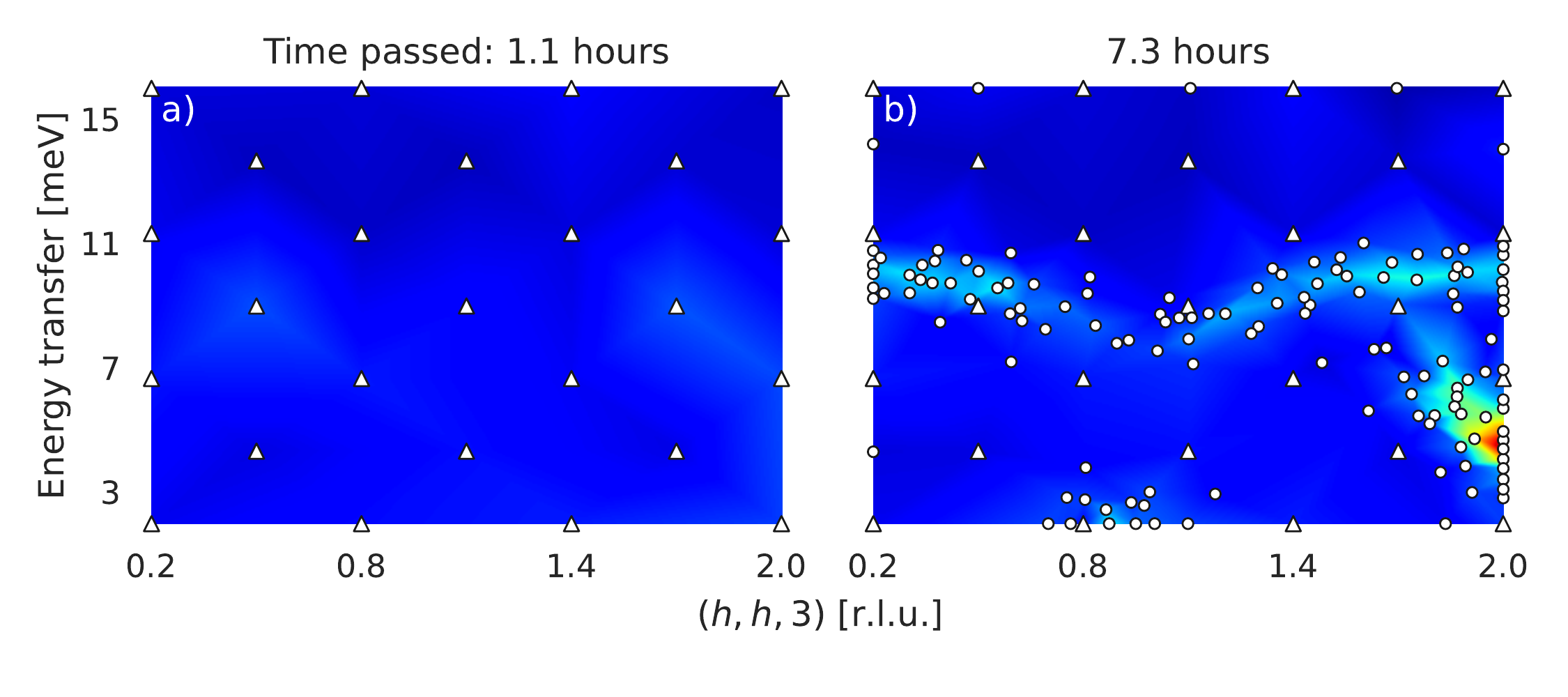}
    \caption{Results for scenario~2.
        For the $\Q$~direction, we use relative lattice units (r.l.u.).
        a) Reduced number of initial measurement points (triangles) in a modified domain providing almost no initial information about the intensity function.
        b) Intensity observations (dots) autonomously placed after the uninformative initialization.}
    \label{fig:eiger_scen2}
\end{figure}
Estimating the background level and intensity threshold is not reasonable in this setting since the amount of initial information is too little.
Therefore, we manually set~$\BackgrLevel=50$ and~$\IntThresh=80$, respectively.
The cost budget~$C$ for our approach was not set explicitly here.
In fact, we stopped the experiment manually after $\sim7.3$~hours of total experimental time, with $\sim1.1$~hours spent on initialization with the modified grid.
The result is depicted in Fig.~\ref{fig:eiger_scen2}b.
It shows that, even in this challenging situation, our approach keeps making reasonable decisions on measurement locations.

%%%%%%%%%%%%%%%%%%%%%%%%%%%%%%%%%%%%%%%%%%%%
\phantomsection
\subsection*{Benchmark}
\addcontentsline{toc}{subsection}{Benchmark}
%%%%%%%%%%%%%%%%%%%%%%%%%%%%%%%%%%%%%%%%%%%%
This section shows results of a benchmark on several two-dimensional (\ie $n=2$) test case intensity functions (Supplementary~Fig.~\ref{fig:base_test_cases}) as a quantitative proof of concept.
The benchmarking procedure quantifies the performance of an approach by how much benefit it is able to acquire for certain cost budgets in the context of predefined test cases.
We briefly describe its setting in the following paragraph.
For more details, we refer to the original work~\cite{teixeiraparente2022benchmarking}.

A test case mainly includes an intensity function defined on a certain set~$\Xset$ and a synthetic TAS used for moving on~$\Xset^*$ with certain velocities and observing intensities.
As mentioned, the cost measure is chosen to be the experimental time used, \ie the sum of cumulative counting time and cumulative time for moving instrument axes.
The benefit measure is defined by a relative weighted $L^2$~approximation error between the target intensity function and a linear interpolation~$\hat{i}=\hat{i}(\A)$ using observed intensities from experiments~$\A$.
It encodes the fact that a TAS experimenter is more interested in regions of signal than in the background which suggests to use~$i$ itself as a weighting function.
However, an important constraint, which is controlled by a benchmark intensity threshold~$\IntThreshBASE>0$ truncating weights to
\begin{equation}
    i_{\IntThreshBASE}(\x) \defas \min \lbrace i(\x),\IntThreshBASE \rbrace,
\end{equation}
as similarly described for the intensity threshold of our approach, is that separate regions of signal with different magnitudes of intensity might be equally interesting.
Formally, we define
\begin{equation}
    \label{eq:def_benefit_measure}
    \mu(\A) \defas \frac{\norm{i-\hat{i}(\A)}}{\norm{i}},
\end{equation}
where $\norm{\cdot} = \norm{\cdot}_{L^2(\Xset^*,\rho_{i,\IntThreshBASE})}$ for the weighting function
\begin{equation}
    \rho_{i,\IntThreshBASE}(\x) \defas \frac{i_{\IntThreshBASE}(\x)}{\int_{\Xset^*} i_{\IntThreshBASE}(\x') \, \d{\x'}}.
\end{equation}
For each test case, benefits are measured for several ascending cost budgets, called \apos{milestone values}, to demonstrate the evolution of performance over time.
We note that this synthetic setting includes neither background nor measurement noise as both are artefacts of real neutron experiments.

We run the benchmarking procedure with three approaches for comparison:
\begin{enumerate}
    \item random approach,
    \item grid approach,
    \item our approach with different values for the intensity threshold parameter~$\IntThreshFact$.
\end{enumerate}

The random approach places intensity observations uniformly at random in~$\Xset^*$.
The grid approach is adopted from the section on the neutron experiment but using a square grid of dimension~$P\times P$, $P\in\N$.

For our approach, we set a zero background level, \ie~$\BackgrLevel=0$, manually since background is not included in this synthetic setting.
Our approach is run with four variations of the intensity threshold parameter~$\IntThreshFact$ (Eq.~\eqref{eq:intens_thresh_param}) in order to study corresponding sensitivities of the benchmark results.

The specific benchmarking procedure involves four milestone values according to the four stages (I-IV) of the grid approach, \ie the $m$-th milestone value represents the experimental time needed to complete stage~$m\in\lbrace\textnormal{I},\textnormal{II},\textnormal{III},\textnormal{IV}\rbrace$.
Observe that they depend on the particular test case.
The specific milestone values used are indicated in Supplementary Note~\ref{si-sec:milestone_values_base}.
The number of columns/rows~$P$ for the grid approach is chosen to be the maximum number such that the corresponding experimental time for performing an experiment in the described order does not exceed $9$~hours.

Since both the random and our approach contain stochastic elements, we perform a number of 100~repeated runs with different random seeds for each test case in order to see the variability of their results.

The results for each test case along with its corresponding intensity function are shown in Fig.~\ref{fig:base_results}.
\begin{figure}
    \centering
    \includegraphics[width=\linewidth]{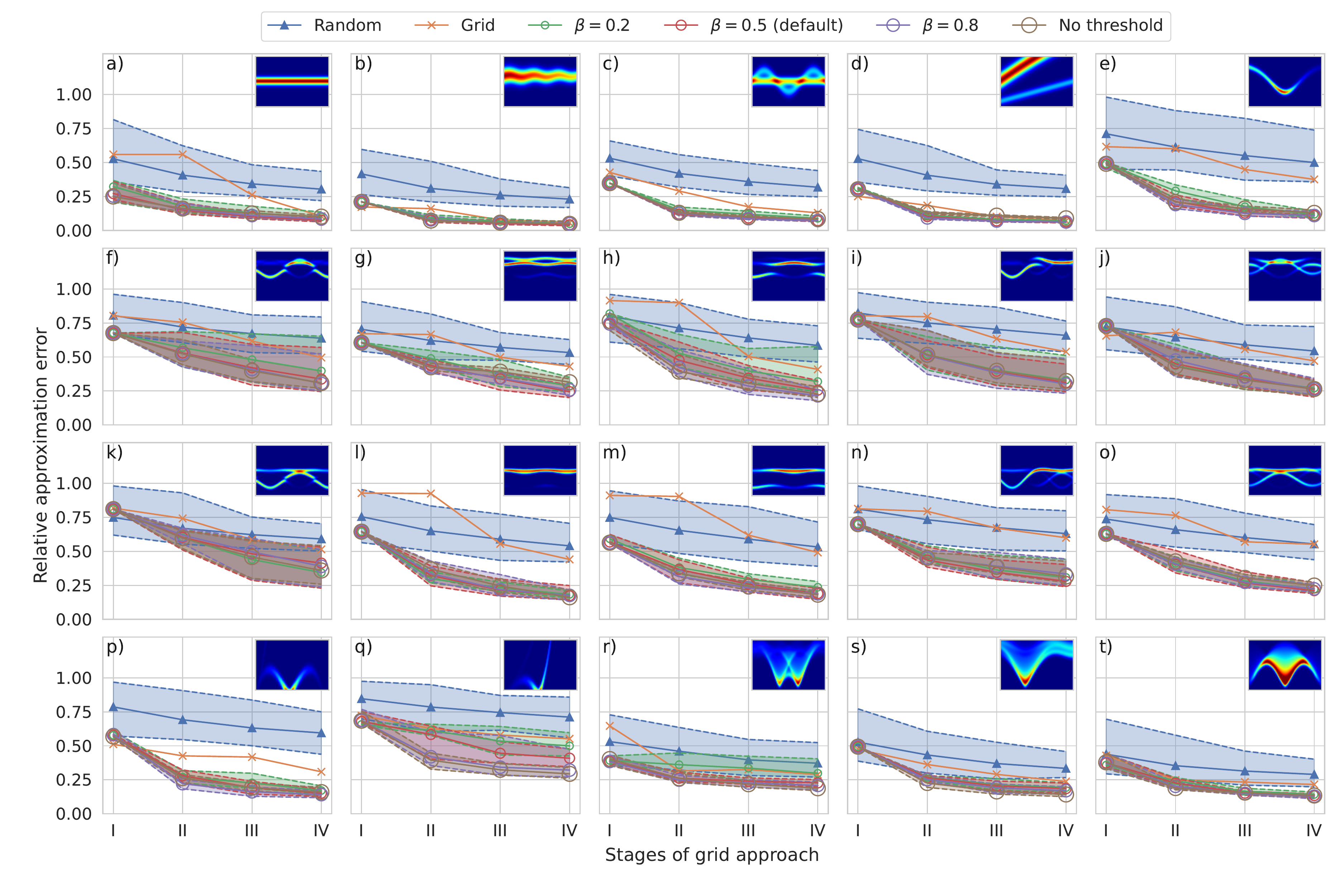}
    \caption{Benchmark results.
        For each approach, every subfigure (a-t) plots the decay of a relative approximation error (Eq.~\eqref{eq:def_benefit_measure}) between the target intensity function of the corresponding test case (top right corners) and a linear interpolation of collected intensity observations for four milestone values (symbols) which are determined by the four stages (I-IV) of the grid approach.
        The solid lines show medians of the resulting benefit values, whereas the light color areas with dashed boundaries indicate the range between their minimum and maximum to visualize their variability caused by stochastic elements.}
    \label{fig:base_results}
\end{figure}
Our approach has, on median average, performed significantly better than the random and grid approach in most test cases.
Furthermore, its mostly thin and congruent shapes of variability (light color areas) demonstrate its reproducibility and its robustness \wrt changes in the intensity threshold parameter~$\IntThreshFact$.
Examples of particular experiments performed by our approach are additionally provided in Supplementary~Fig.~\ref{fig:base_experiments_ariane} for each test case.

A well-formulated and sustainable comparison of our approach with \gpcam in this benchmark setting is currently difficult to implement because, to the best of our knowledge, \gpcam does not specify how to choose its main parameter, the acquisition function, in the TAS setting.
In this situation, where relevant information is missing, it would be inappropriate to compare the two approaches at this point.
We have therefore decided to provide a comparison that is currently possible, \ie based on all the information available to us about \gpcam, in Supplementary Note~\ref{si-sec:comp_gpcam}.
The acquisition function used there is chosen to be the same as that used for the neutron experiment at the TAS ThALES (ILL)~\cite[Eq.~(6)]{noack2021gaussian}.

%%%%%%%%%%%%%%%%%%%%%%%%%%%%%%%%%%%%%%%%%%%%%%%%%%%%%%%%%%%%%%%%%%%%%%%%%%%%%%%%%%%%%%%%
\phantomsection
\section*{Discussion}
\addcontentsline{toc}{section}{Discussion}
%%%%%%%%%%%%%%%%%%%%%%%%%%%%%%%%%%%%%%%%%%%%%%%%%%%%%%%%%%%%%%%%%%%%%%%%%%%%%%%%%%%%%%%%
The results of the neutron experiment demonstrate the benefits of our approach.
Indeed, in scenario~1 (Fig.~\ref{fig:eiger_scen1_compar}), our approach identifies regions of strong as well as of weak signal in each setting and even finds isolated relevant signals of small shape at the edge repeatedly.
Therefore, for this experimental setting, the results are shown to be robust \wrt changes in the estimated background level and intensity threshold, which we view as an important outcome of this experiment.
The choice of these parameters is however directly reflected in the placement of measurement points which indicates a certain aspect of interpretability and explainability~\cite{roscher2020explainable,belle2021principles} of our approach.
An intensity threshold higher in value namely leads to measurement points that are placed on a thinner branch of signal (row~3 in Fig.~\ref{fig:eiger_scen1_compar}), whereas as a lower background level yields more exploratory behaviour, with a risk to measure in regions of background from time to time (row~4 in Fig.~\ref{fig:eiger_scen1_compar}).
Additionally, note that a smaller intensity threshold results in measurement points also being placed in regions of high intensity gradient.

Furthermore, in each setting, our approach (rows~2-4 in Fig.~\ref{fig:eiger_scen1_compar}) has significantly fewer measurement points in the background compared to the grid approach (top row in Fig.~\ref{fig:eiger_scen1_compar}) as expected and thus uses experimental time more efficiently.
The grid approach additionally does not cover small signal regions at the edge.
A simulated intensity between the two regions of signal on the right~\cite[Fig.~1c]{li2014phonon} is not seen by our approach which is, however, in agreement with the original experiment~\cite[Fig.~1b]{li2014phonon}.
In situations like this, a human experimenter can focus on such details and place additional measurement points manually, if necessary.

When applying GPR, we use a common Gaussian kernel as detailed in the Methods section.
The (logarithm of the) intensity function from Fig.~\ref{fig:eiger_scen1_compar} however violates the stationarity of this kernel.
Indeed, using a stationary kernel assumes homogeneous statistical behaviour of the intensity function across the entire domain~$\Xset$ which is not the case for our particular scenario.
The length scale along the $E$ axis, for instance, is differing for different values on the $\Q$~axis.
In the middle of the $\Q$~axis, the length scale is certainly larger than near its edges.
Note that the length scale along the $\Q$~axis is also non-stationary.
These discrepancies are one of the main reasons for our choice of the material SnTe and the setting mentioned above since it provides an opportunity to demonstrate that a stationary kernel, which is computationally substantially cheaper than non-stationary kernels, is sufficient for identifying regions of signal and hence for performing efficient experiments.

In scenario~2 (Fig.~\ref{fig:eiger_scen2}), we challenged our approach with a difficult setting.
Although the initialization grid and the domain were organized such that no initial measurement point is located in a region of signal, and hence the approach is initialized with little useful information, its behaviour stays reasonable.
It namely keeps placing measurements in regions of signal and can still identify the small region with strong signal on the bottom right.
However, the signal region in the middle of the domain is not fully identified, which can be explained by the relatively short experimental time ($7.3$~hours) as well as by the stationarity of the kernel since the sparse data suggest a short length scale along the $E$~axis leading to the assumption of lower function values in an actual region of signal.

Note that the reduced amount of initial measurement points in scenario~2 is only applied for the purpose of challenging our approach and a demonstration of its robustness.
In productive operation, however, we stay with the larger initialization grid from scenario~1 since placing some measurements in regions of background for a proper determination of the same is a valuable step during an experiment providing relevant information for the experimenter.
Fortunately, this mostly leads to a sufficiently good initialization of our approach, despite using a common stationary kernel.

For further results of neutron experiments in the setting from scenario~1 as well as from an additional scenario (scenario~3) that investigated another phonon of SnTe in the default setting of our approach, we refer to Supplementary Note~\ref{si-sec:results_neutron_experim}.

The benchmark results (Fig.~\ref{fig:base_results}), as a proof of concept, confirm the outcomes of the neutron experiment quantitatively in a synthetic setting.
Our approach shows a better performance, measured by a relative weighted approximation error (Eq.~\eqref{eq:def_benefit_measure}) between a target intensity function and a linear interpolation of collected intensity observations, compared to the random and grid approach for all test cases.
The improvements are especially substantial for target intensity functions with smaller regions of signal (Fig.~\ref{fig:base_results}a-q).
For intensity functions that cover a substantial part of the domain~$\Xset$ (Fig.~\ref{fig:base_results}r-t), the competing approaches are also able to place intensity observations in regions of signal early during the experiment and hence it is difficult for any approach to demonstrate its benefit in these scenarios.

The variability in the results of our approach containing stochastic elements, quantified by the median of benefit values and the range between their minimum and maximum, is shown to be small for most test cases and acceptable for the remainder.
Our approach can thus be considered reliable with reproducible results despite starting with a different sequence of pseudo-random numbers for each run.

Moreover, the benchmark results indicate that our approach is robust \wrt changes in the intensity threshold.
The four variations of the corresponding controlling parameter~$\IntThreshFact$ (Eq.~\eqref{eq:intens_thresh_param}) yield similar results for most test cases.
It should be noted, however, that the use of a reasonable value for~$\IntThreshFact$ in real neutron experiments is nevertheless important.
Indeed, it not only eliminates the effect of outliers in initial intensities but also, recall, allows to control the width of signal branches on which the measurement points are placed, and thus the extent to which regions of high gradient are preferred over those of peak intensity.

In addition to the results of the neutron experiment, the particular experiments performed by our approach in the benchmark setting (Supplementary~Fig.~\ref{fig:base_experiments_ariane}) confirm that it, after initialization, autonomously places a large part of measurement points in regions of signal for a variety of different intensity distributions.

As a conclusion, in the previous sections, we demonstrated that our approach indeed improves the efficiency of TAS experiments and allows to make better use of the experimenter's time.
It maximizes an acquisition function given by approximation uncertainties of a log-Gaussian process in order to find informative measurement points and not waste experimental time in regions such as the background.
Our robust and reproducible results suggest that it is in fact capable of autonomously obtaining a rapid overview of intensity distributions in various settings.

In a real neutron experiment at the thermal TAS EIGER, our approach repeatedly demonstrated its ability to identify regions of signal successfully leading to a more efficient use of beam time at a neutron source of a large-scale research facility.
It was additionally shown that it keeps making reasonable decisions even when being initialized with little information.
Furthermore, substantial performance improvements, in comparison with two competing approaches, and their robustness were quantified in a synthetic benchmark setting for numerous test cases.

Nevertheless, we feel that the automated estimation of algorithmic parameters (background level and intensity threshold) from statistics of initial measurements, despite its good performance in each of the settings discussed, needs to be confirmed in future experiments.

Our approach, so far, solves the problem of where to measure.
However, an interesting topic of future research is the major question of how to measure at a certain location.
Counting times were assumed to be constant in this work and thus promise to be another possibility to save experimental time if determined autonomously in an advantageous way.
Indeed, large counting times in regions of high intensities and background are actually not needed since the necessary information can also be collected in less time.
Experimenters, in contrast, are often more interested in weaker signals and their comprehensive measurement to reduce corresponding error bars.

Moreover, although using a common stationary kernel for GPR has proven to be sufficient for identifying regions of signal, we regard the application of non-stationary kernels with input-dependent hyperparameters~\cite{paciorek2003nonstationary,plagemann2008nonstationary,heinonen2016non,tolvanen2014expectation}, \eg length scales, also modelled by log-Gaussian processes, as another interesting option for further investigations.

Finally, a certainly more involved topic for future research is the parameter estimation for physical models such as Hamiltonians, if available, in a Bayesian framework using data collected by an autonomous approach.
Once this estimation is possible, a natural follow-up question is how our approach, which is completely model-agnostic, \ie it does not take into account any information from a physical model, compares to a model-aware approach in terms of reducing uncertainties of parameters to be estimated while minimizing experimental time.

%%%%%%%%%%%%%%%%%%%%%%%%%%%%%%%%%%%%%%%%%%%%%%%%%%%%%%%%%%%%%%%%%%%%%%%%%%%%%%%%%%%%%%%%
\phantomsection
\section*{Methods}
\addcontentsline{toc}{section}{Methods}
%%%%%%%%%%%%%%%%%%%%%%%%%%%%%%%%%%%%%%%%%%%%%%%%%%%%%%%%%%%%%%%%%%%%%%%%%%%%%%%%%%%%%%%%
Our approach is methodologically based on \textit{Gaussian process regression} (GPR)~\cite{rasmussen2006gaussian}, a Bayesian technique for estimating and approximating functions from pointwise data that allows to quantify uncertainties on the approximation itself in the form of normal distributions.
We fit a Gaussian process to logarithmic intensity observations and exponentiate the resulting posterior process yielding a log-Gaussian process.
As mentioned, this transformation causes approximation uncertainties to be higher in regions of signal which in turn can be exploited for the definition of a useful acquisition function in the TAS setting.

%%%%%%%%%%%%%%%%%%%%%%%%%%%%%%%%%%%%%%%%%%%%%%%
\phantomsection
\subsection*{Gaussian process regression}
\addcontentsline{toc}{subsection}{Gaussian process regression}
%%%%%%%%%%%%%%%%%%%%%%%%%%%%%%%%%%%%%%%%%%%%%%%
We generally intend to approximate a function of interest~$f:\Xset\to\R$, which becomes the logarithm of the intensity function later.
Using GPR for this, we have to assume that~$f$ is a realization of a Gaussian process~$F$.
\begin{definition}[Gaussian process]
    A \textit{Gaussian process}
    \begin{equation}
        F \sim \mathcal{GP}(m(\x),\kappa_\theta(\x,\x'))
    \end{equation}
    with (prior) mean function $m:\Xset\to\R$ and parameterized kernel function $\kappa_\theta:\Xset\times\Xset\to[0,\infty)$ is a collection of random variables~$(F(\x))_{\x\in\Xset}$ such that for any finite amount of evaluation points~$\x^{(\ell)}\in\Xset$, $\ell=1,\ldots,L$, $L\in\N$, the random variables~$F^{(\ell)} \defas F(\x^{(\ell)})$ follow a joint normal distribution, \ie
    \begin{equation}
        (F^{(1)}, \ldots, F^{(L)})\tr \sim \Normdist{\vec{m}}{\mathcal{K}},
    \end{equation}
    where
    \begin{equation}
        \vec{m} = (m(\x^{(1)}),\ldots,m(\x^{(L)}))\tr \in \R^L \quad\text{and}\quad \mathcal{K}_{\ell_1,\ell_2} = \kappa_\theta(\x^{(\ell_1)},\x^{(\ell_2)}) \geq 0.
    \end{equation}
\end{definition}

Note that, for each $\x\in\Xset$, it particularly holds that
\begin{equation}
    F(\x) \sim \Normdist{m(\x)}{\sigma^2(\x)},
\end{equation}
where $\sigma^2(\x) \defas \Var{F(\x)} = \kappa_\theta(\x,\x)\geq0$ is the (prior) variance function.

The kernel function~$\kappa_\theta$ is an important component in GPR since it describes the correlation structure between random variables~$F(\x)$.
Hence, it enables to include assumptions on realizations of~$F$ and determines function properties like regularity, periodicity, symmetry, etc.
In practice, it is crucial to choose a kernel function that matches with properties of the function of interest~$f$.

We acquire knowledge on~$f$ through noisy observations~$\hat{f}_j$ at locations~$\x_j\in\Xset^*$ (Eq.~\eqref{eq:def_experim}) in our context.
Therefore, for $j=1,\ldots,J$, we set
\begin{equation}
    \label{eq:def_GP_obs}
    \hat{F}(\x_j) = F(\x_j) + e(\x_j)\eta_j,
\end{equation}
where $\eta_j\sim\Normdist{0}{1}$ are \iid random variables independent of~$F(\x_j)$ and $e(\x)>0$ denotes the noise standard deviation at~$\x\in\Xset$.
Note that~$\hat{F}(\x_j)$ is a normally distributed random variable.

For clear notation, we define
\begin{equation}
    X_J \defas
    \begin{pmatrix}
        | & & | \\
        \x_1 & \cdots & \x_J \\
        | & & |
    \end{pmatrix} \in \R^{n\times J}
    \quad\textnormal{and}\quad
    \hat{\f}_J \defas (\hat{f}_1,\ldots,\hat{f}_J)\tr \in \R^J,
\end{equation}
and let
\begin{equation}
    h(X_J) \defas (h(\x_1),\ldots,h(\x_J))\tr \in \R^J
\end{equation}
for any function~$h:\Xset\to\R$.

After observations have been made, we are interested in the posterior, \ie conditional, Gaussian process.
It holds that
\begin{equation}
    F(\x) \| (\hat{F}(X_J)=\hat{\f}_J) \sim \Normdist{m_J(\x)}{\sigma_J^2(\x)}
\end{equation}
with the posterior mean function
\begin{equation}
    \label{eq:post_mean_fct}
    m_J(\x) = m(\x) + \kappa_J(\x,X_J)\tr \left[ \kappa_J(X_J,X_J) + \diag{e(X_J)}^2 \right]\inv (\hat{\f}_J-m(X_J))
\end{equation}
and the posterior variance function
\begin{equation}
    \label{eq:post_var_fct}
    \sigma_J^2(\x) = \sigma^2(\x) - \kappa_J(\x,X_J)\tr \left[ \kappa_J(X_J,X_J) + \diag{e(X_J)}^2 \right]\inv  \kappa_J(\x,X_J).
\end{equation}
If necessary, the hyperparameters~$\theta_J$ of the kernel function~$\kappa_J\defas \kappa_{\theta_J}$ can be optimized using data~$\hat{\f}_J$.
For this, we compute~$\theta_J$ such that the logarithm of the so-called marginal likelihood, \ie
\begin{equation}
    \label{eq:marginal_likelihood}
    \begin{split}
        \log \rho_{\hat{F}(X_J)}(\hat{\f}_J) &= \log \left( \int_{\R^J} \rho_{\hat{F}(X_J)|F(X_J)}(\hat{\f}_J|\f) \, \rho_{F(X_J)}(\f) \; \d{\f} \right) \\
        &= -\frac{1}{2} \, (\hat{\f}_J-m(X_J))\tr\left[ \kappa_J(X_J,X_J) + \diag{e(X_J)}^2 \right]\inv (\hat{\f}_J-m(X_J)) \\
        &\qquad - \frac{1}{2} \log \left\lvert \kappa_J(X_J,X_J) + \diag{e(X_J)}^2 \right\rvert - \frac{n}{2} \log 2\pi,
    \end{split}
\end{equation}
is maximized.
A suitable optimizer is specified below.
Note that the analytical expression for the integral in Eq.~\eqref{eq:marginal_likelihood} is only feasible due to the normal distributions involved.
However, the computational cost of GPR is often hidden in this kernel optimization step since it requires solving linear systems and computing determinants \cite[Sec.~2.3]{rasmussen2006gaussian}.
An appropriate criterion for stopping kernel optimizations that detects stagnant hyperparameters during an experiment is therefore provided below.
Furthermore, observe that, for a fixed non-optimized kernel, $\sigma_J^2$ does not depend on observations~$\hat{\f}_J$ but only on locations~$X_J$ they have been made at.

The posterior mean function~$m_J:\Xset\to\R$, incorporating knowledge on $J$~noisy observations of the function of interest~$f$, can now be used as an approximation to~$f$, whereas the posterior variance function~$\sigma_J^2:\Xset\to[0,\infty)$ quantifies the corresponding uncertainties (Supplementary~Fig.~\ref{fig:gpr_demo}).
Note that $\sigma_J^2(\x)<\sigma^2(\x)$ for each~$\x\in\Xset$ since $\left[ \kappa_J(X_J,X_J) + \diag{e(X_J)}^2 \right]\inv$ is positive-definite.
Since we have $m\equiv0$ later, we can further simplify Eq.~\eqref{eq:post_mean_fct} to
\begin{equation}
    m_J(\x) = \kappa_J(\x,X_J)\tr \left[ \kappa_J(X_J,X_J) + \diag{e(X_J)}^2 \right]\inv \hat{\f}_J.
\end{equation}

%%%%%%%%%%%%%%%%%%%%%%%%%%%%%%%%%%%%%%%%%%%%%%%
\phantomsection
\subsection*{Log-Gaussian processes}
\addcontentsline{toc}{subsection}{Log-Gaussian processes}
%%%%%%%%%%%%%%%%%%%%%%%%%%%%%%%%%%%%%%%%%%%%%%%
The Gaussian process, which is fitted to logarithmic intensity observations in our approach, is exponentiated to the original linear scale in this section to become log-Gaussian.
Before describing details of its application in the TAS setting, we first give the relevant definitions and mention a technical detail that will be important below.
\begin{definition}[Log-normal distribution~\cite{mathworld2021lognormal}]
    Let $\eta\sim\Normdist{0}{1}$.
    Then, for parameters~$\mu\in\R$ and~$\sigma>0$, the random variable
    \begin{equation}
        \label{eq:def_Z}
        Z = \exp(\mu+\sigma\eta)
    \end{equation}
    is said to follow a \textit{log-normal distribution}, denoted by $Z\sim\log$-$\Normdist{\mu}{\sigma^2}$.
\end{definition}
The mean and variance of a log-normally distributed random variable~$Z$ are given by
\begin{equation}
    \E{Z} = \exp\left( \mu+\frac{\sigma^2}{2} \right) \quad\text{and}\quad \Var{Z} = (\exp(\sigma^2)-1) \cdot \exp(2\mu+\sigma^2).
\end{equation}
Below, we look at the noise distribution of log-Gaussian processes in order to satisfy our assumption on intensity observations to contain normally distributed noise (Eq.~\eqref{eq:intensity_noise_physics}).
The following mathematical result is fundamental for this as it establishes a link between normal and log-normal distributions.
It is proved in Supplementary Note~\ref{si-sec:proof_prop}.
\begin{proposition}[Small variance limit for normalized log-normally distributed random variables]
    \label{prop:small_var_limit}
    Let~$Z$ be a log-normally distributed random variable as in Eq.~\eqref{eq:def_Z} and define the corresponding normalized random variable
    \begin{equation}
        \overline{Z} \defas \frac{Z-\E{Z}}{\sqrt{\Var{Z}}}.
    \end{equation}
    Then, the random variable~$\overline{Z}$ converges pointwise to~$\eta$ as $\sigma\to0^+$, \ie
    \begin{equation}
        \label{eq:lognorm_conv}
        \overline{Z}(\omega) \rightarrow \eta(\omega) \;\;\; \textnormal{as $\sigma\to0^+$}
    \end{equation}
    for each $\omega\in\Omega$, where~$\Omega$ denotes the sample space.
\end{proposition}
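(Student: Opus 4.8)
The plan is to substitute the closed-form expressions for the mean and variance of a log-normal random variable, stated immediately above the proposition, directly into the definition of $\overline{Z}$ and then extract the $\sigma\to0^+$ behaviour by a Taylor expansion at fixed $\omega$. Writing $\E{Z}=\exp(\mu+\sigma^2/2)$ and $\sqrt{\Var{Z}}=\sqrt{\exp(\sigma^2)-1}\cdot\exp(\mu+\sigma^2/2)$, I would first factor the common exponential $\exp(\mu+\sigma^2/2)$ out of both numerator and denominator. This is the decisive simplification, because it cancels the $\mu$-dependence entirely and, using $Z=\exp(\mu+\sigma\eta)$, reduces the ratio to
\begin{equation}
    \overline{Z} = \frac{\exp(\sigma\eta - \sigma^2/2) - 1}{\sqrt{\exp(\sigma^2)-1}},
\end{equation}
which depends only on $\sigma$ and on the fixed value $\eta=\eta(\en)$.

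Next, I would fix $\en$ so that $\eta$ is a fixed real number and analyse the limit of this $0/0$ expression as $\sigma\to0^+$. Expanding the numerator gives $\exp(\sigma\eta-\sigma^2/2)-1=\sigma\eta+O(\sigma^2)$, while the denominator satisfies $\exp(\sigma^2)-1=\sigma^2(1+O(\sigma^2))$, hence $\sqrt{\exp(\sigma^2)-1}=\sigma(1+O(\sigma^2))$. Dividing the two expansions, the leading $\sigma$ factors cancel and the ratio converges to $\eta$, which is exactly the claimed pointwise limit. Equivalently, one could apply L'Hôpital's rule to the ratio viewed as a function of $\sigma$, but the Taylor route is more transparent and avoids differentiating the square root.

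I do not expect a genuine obstacle here: once the common exponential is factored out, what remains is a routine first-order Taylor analysis of a scalar function of $\sigma$. The only point requiring a little care is the bookkeeping of the remainder terms, namely checking that the $O(\sigma^2)$ error in the numerator is of strictly smaller order than the $\sigma\eta$ leading term after division by the $\sigma(1+O(\sigma^2))$ denominator, so that it vanishes in the limit rather than contributing a spurious constant. Since convergence is asserted pointwise for each fixed $\en$, no uniformity or measure-theoretic argument is needed, and the statement follows directly from this deterministic scalar limit.
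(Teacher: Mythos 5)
Your proposal is correct and takes essentially the same route as the paper's proof: both cancel the $\mu$-dependence by factoring $\exp(\mu+\sigma^2/2)$ out of numerator and denominator, reduce the claim to a deterministic scalar limit at each fixed sample point, and evaluate that limit using the first-order behaviour $\lim_{x\to 0}(\exp(x)-1)/x=1$. The paper merely packages the scalar limit as a standalone lemma and keeps the factor $\exp(\sigma^2/2)$ in the denominator instead of absorbing it into the numerator as $\exp(\sigma\eta-\sigma^2/2)-1$; this difference is purely cosmetic.
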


As mentioned, the exponentiation of a Gaussian process is log-Gaussian, by definition.
\begin{definition}[Log-Gaussian process]
    Let~$F$ be a Gaussian process.
    Then, the random process~$(I(\x))_{\x\in\Xset}$ with
    \begin{equation}
        \label{eq:I_exp_F}
        I(\x) = \exp(F(\x))
    \end{equation}
    is called a \textit{log-Gaussian process}.
\end{definition}
Using Eqs.~\eqref{eq:post_mean_fct} and~\eqref{eq:post_var_fct}, it immediately follows for the posterior log-Gaussian process that
\begin{equation}
    I(\x) \| (\hat{F}(X_J)=\hat{\f}_J) \sim \text{$\log$-}\Normdist{m_J(\x)}{\sigma_J^2(\x)}.
\end{equation}
In particular, its posterior mean function is
\begin{equation}
    \E{I(\x) \| \hat{F}(X_J)=\hat{\f}_J} = \exp\left( m_J(\x) + \frac{\sigma_J^2(\x)}{2} \right)
\end{equation}
and its posterior variance function becomes
\begin{equation}
    \label{eq:post_var_fct_loggp}
    \Var{I(\x) \| \hat{F}(X_J)=\hat{\f}_J} = (\exp(\sigma_J^2(\x))-1) \cdot \exp(2m_J(\x)+\sigma_J^2(\x)).
\end{equation}

%%%%%%%%%%%%%%%%%%%%%%%%%%%%%%%%%%%%%%%%%%%%%%%
\phantomsection
\subsection*{Application to TAS}
\addcontentsline{toc}{subsection}{Application to TAS}
%%%%%%%%%%%%%%%%%%%%%%%%%%%%%%%%%%%%%%%%%%%%%%%
%%%%%%%%%%%%%%%%%%%%%%
\subsubsection*{Log-Gaussian processes for TAS}
%%%%%%%%%%%%%%%%%%%%%%
In the context of our methodology from the previous sections, we choose the function of interest to be the logarithm of the intensity function from the TAS setting (Supplementary Note~\ref{si-sec:setting_tas}), \ie
\begin{equation}
    f = \log i.
\end{equation}
If we define
\begin{equation}
    \label{eq:def_I_obs}
    \hat{F} \asdef \log\hat{I},
\end{equation}
relating to Eq.~\eqref{eq:I_exp_F}, \ie $F = \log I$, Eq.~\eqref{eq:def_GP_obs} gives
\begin{equation}
    \label{eq:I_obs_noise}
    \begin{alignedat}{3}
        && \log \hat{I}(\x_j) &= \log I(\x_j) + e(\x_j)\eta_j \\
        \Leftrightarrow \; && \hat{I}(\x_j) &= I(\x_j) \cdot \exp(e(\x_j)\eta_j)
    \end{alignedat}
\end{equation}
for measurement locations~$\x_j\in\Xset^*$.
Note that, in contrast to the process~$\hat{F}$ containing additive normally distributed noise (Eq.~\eqref{eq:def_GP_obs}), the noise of~$\hat{I}$ is multiplicative and log-normal, \ie
\begin{equation}
    \hat{I}(\x_j) \| (I(\x_j)=i(\x_j)) = i(\x_j) \cdot \exp(e(\x_j)\eta_j).
\end{equation}
However, referring to Eq.~\eqref{eq:intensity_noise_physics}, the physical assumption on the noise of intensity observations~$I^+$ is to be additive and normally distributed, \ie
\begin{equation}
    I^+(\x_j) \| (I(\x_j)=i(\x_j)) = i(\x_j) + e^+(\x_j)\eta^+_j,
\end{equation}
where $\eta^+_j\sim\Normdist{0}{1}$ and $e^+(\x_j) = \sqrt{i(\x_j)}$.
Fortunately, the actual deviation of the two different noise distributions is negligible for sufficiently large intensities~$i(\x_j)$ as the following explanations demonstrate.

As a first step, let us determine $e(\x_j)$ from Eq.~\eqref{eq:I_obs_noise} such that the variances of both noise distributions are equal, \ie
\begin{equation}
    \begin{alignedat}{3}
        && \Var{\hat{I}(\x_j) \| I(\x_j)=i(\x_j)} &= \Var{I^+(\x_j) \| I(\x_j)=i(\x_j)} \\
        \Leftrightarrow && \; i(\x_j)^2 \cdot (\exp(e(\x_j)^2)-1) \cdot \exp(e(\x_j)^2) &= i(\x_j),
    \end{alignedat}
\end{equation}
which yields
\begin{equation}
    \label{eq:def_e_x_log}
    e(\x_j)^2 = \log\left( \frac{1}{2} \left( \sqrt{4/i(\x_j) + 1} + 1 \right) \right).
\end{equation}
Note that the intensity term~$i(\x_j)$ in Eq.~\eqref{eq:def_e_x_log} is not known in practice but can be replaced by the corresponding observation~$\IntMeas_j\approx i(\x_j)$. 

Since we aim to apply the small variance limit for log-normally distributed random variables from above (Eq.~\eqref{eq:lognorm_conv}), we set
\begin{equation}
    Z = \hat{I}(\x_j) \| (I(\x_j)=i(\x_j)) \sim \text{$\log$-$\Normdist{\mu}{\sigma^2}$}
\end{equation}
with
\begin{equation}
    \mu = \log i(\x_j) \quad\text{and}\quad \sigma^2 = \log\left( \frac{1}{2} \left( \sqrt{4/i(\x_j) + 1} + 1 \right) \right).
\end{equation}
Observe that
\begin{equation}
    \label{eq:sigma_to_0+}
    \sigma \to 0^+ \;\;\; \textnormal{as $i(\x_j) \to \infty$}.
\end{equation}
Furthermore, note that the term~$\mu$ also depends on the limit~$i(\x_j)\to\infty$ and hence on~$\sigma\to0^+$, in contrast to the setting above, but disappears in the calculations due to cancellations (see proof in Supplementary Note~\ref{si-sec:proof_prop}).
Using the small variance limit, we immediately get the following result (Supplementary~Fig.~\ref{fig:lgp_noise}).
\begin{proposition}
    The normalization of the noise random variable~$\hat{I}(\x_j) \| (I(\x_j)=i(\x_j))$ converges pointwise to~$\eta$ as $i(\x_j)\to\infty$.
    In particular, it converges in distribution to a standard normal distribution.
\end{proposition}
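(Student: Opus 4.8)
The plan is to obtain this statement as an almost immediate consequence of Proposition~\ref{prop:small_var_limit} combined with the limit relation~\eqref{eq:sigma_to_0+}. First I would set $Z = \hat{I}(\x_j) \| (I(\x_j)=i(\x_j))$, which by construction is log-normally distributed of the form in Eq.~\eqref{eq:def_Z} with $\mu = \log i(\x_j)$ and the stated $\sigma^2$. The normalization appearing in the statement is then exactly the variable $\overline{Z} = (Z-\E{Z})/\sqrt{\Var{Z}}$ from Proposition~\ref{prop:small_var_limit}, and by~\eqref{eq:sigma_to_0+} sending $i(\x_j)\to\infty$ forces $\sigma\to0^+$. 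A direct application of Proposition~\ref{prop:small_var_limit} would then give the pointwise convergence $\overline{Z}(\en)\to\eta(\en)$ for each $\en$.

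The one genuine subtlety --- and the main obstacle --- is that Proposition~\ref{prop:small_var_limit} is stated with $\mu$ held fixed as $\sigma\to0^+$, whereas here $\mu = \log i(\x_j)\to\infty$ simultaneously with $i(\x_j)\to\infty$. To resolve this I would show that $\overline{Z}$ does not depend on $\mu$ at all. Writing out
\begin{equation}
    \overline{Z} = \frac{\exp(\mu+\sigma\eta) - \exp(\mu+\sigma^2/2)}{\sqrt{(\exp(\sigma^2)-1)\exp(2\mu+\sigma^2)}},
\end{equation}
the common factor $\exp(\mu)$ cancels between numerator and denominator, leaving
\begin{equation}
    \overline{Z} = \frac{\exp(\sigma\eta) - \exp(\sigma^2/2)}{\sqrt{(\exp(\sigma^2)-1)\exp(\sigma^2)}},
\end{equation}
an expression in $\sigma$ and $\eta$ only. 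This is precisely the cancellation noted in the text before the statement. Because $\overline{Z}$ is manifestly $\mu$-independent, the conclusion of Proposition~\ref{prop:small_var_limit} holds identically in $\mu$, so letting $\mu$ drift to infinity with $i(\x_j)$ has no effect whatsoever on the limit, and the pointwise convergence $\overline{Z}(\en)\to\eta(\en)$ is preserved.

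Finally, for the second claim I would invoke the standard fact that pointwise convergence of random variables (which is convergence everywhere on $\Omega$, hence in particular almost sure) implies convergence in distribution. Since the pointwise limit $\eta$ is standard normal by assumption, the normalized noise variable converges in distribution to $\Normdist{0}{1}$, completing the argument. I expect no further technical difficulty beyond stating this implication; the entire proof reduces to the $\mu$-independence observation together with the previously established small-variance limit.
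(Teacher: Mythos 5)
Your proposal is correct and takes essentially the same route as the paper: the paper likewise obtains the result by applying Proposition~\ref{prop:small_var_limit} once the limit $i(\x_j)\to\infty$ is identified with $\sigma\to0^+$, and it dispenses with the simultaneously drifting $\mu=\log i(\x_j)$ by exactly the cancellation of the common factor $\exp(\mu)$ that you exhibit, which is what renders $\overline{Z}$ $\mu$-independent in the supplementary computation. Your closing step, that pointwise (hence almost sure) convergence implies convergence in distribution, is likewise how the paper concludes.
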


If we set 
\begin{equation}
    \hat{\i}_J \defas (\IntMeas_1,\ldots,\IntMeas_J)\tr \in [0,\infty)^J,
\end{equation}
the acquisition function~$\acq_J$ of our approach can now be defined as
\begin{equation}
    \label{eq:def_acq_fct}
    \acq_J(\x) \defas \sqrt{\Var{I(\x) \| \hat{I}(X_J)=\hat{\i}_J}},
\end{equation}
which eventually gives Eq.~\eqref{eq:def_acq_fct_simple} through Eq.~\eqref{eq:post_var_fct_loggp}.

%%%%%%%%%%%%%%%%%%%%%%
\subsubsection*{Intensity threshold and background level}
%%%%%%%%%%%%%%%%%%%%%%
The intensity threshold~$\IntThresh$ and background level~$\BackgrLevel$ have already been introduced informally in the Results section.
Taking both into account as explained, the log-Gaussian process actually aims to approximate the modified intensity function
\begin{equation}
    \label{eq:intens_fct_backgr_thresh}
    i_{\BackgrLevel, \IntThresh}(\x) \defas \max \lbrace \min \lbrace i(\x), \tau \rbrace - \BackgrLevel, 0 \rbrace.
\end{equation}
Observe that, by regarding intensity observations adjusted according to Eq.~\eqref{eq:intens_fct_backgr_thresh}, the assumption of their noise being normally distributed is violated in general as their noise distribution is asymmetric.
In particular, if~$i(\x_j)$ substantially exceeds the intensity threshold~$\IntThresh$, the distribution gets rather concentrated at~$\IntThresh$ and thus small in variance.
We, however, do assume noise on adjusted intensities as if they were observed so without adjusting.
This does not change the expected behaviour of getting a useful acquisition function but even ensures numerical stability since noise regularizes the computational problem of solving linear systems in GPR.

It remains to explain how we seek to compute suitable values for the background level and intensity threshold without knowing the intensity function.
We estimate~$\BackgrLevel=\BackgrLevel(\A_0)$ and~$\IntThresh=\IntThresh(\A_0)$ by statistics of $J_0\in\N$ initial measurement points
\begin{equation}
    \label{eq:init_exp}
    \A_0 \defas ((\x_1,\IntMeas_1), \ldots, (\x_{J_0},\IntMeas_{J_0}))
\end{equation}
collected to initialize our approach.
The arrangement of initial measurement locations is specified below.

For computing the background level~$\BackgrLevel$, we divide the initial intensity observations sorted in ascending order into 10~buckets
\begin{equation}
    B_l \defas \lbrace \IntMeas_j \| D_{l-1} < \IntMeas_j \leq D_l, j = 1,\ldots,J_0 \rbrace, \quad l=1,\ldots,10,
\end{equation}
where $D_l$, $l=1,\ldots,9$, denotes the $l$-th decile of initial intensity observations, $D_0\defas-\infty$, and~$D_{10}\defas+\infty$.
The relative and absolute differences of the bucket medians, \ie
\begin{equation}
    \diffRel_l \defas \frac{m_{l+1}-m_l}{m_l} \quad\textnormal{and}\quad \diffAbs_l \defas m_{l+1}-m_l
\end{equation}
with $m_l \defas \median(B_l)$, $l=1,\ldots,9$, are taken to select the first bucket median which has a sufficiently large (relative and absolute) difference to its successor provided the corresponding decile does not exceed a maximum decile.
That is, we define
\begin{equation}
    l^* \defas \min\left( \left\lbrace l\in\lbrace 1,\ldots,9 \rbrace \| \diffRel_l > \BackgrLevelDiffRelMax  \;\wedge\; \diffAbs_l \geq \BackgrLevelDiffAbsMin \right\rbrace \cup \lbrace \BackgrLevelDecileMax \rbrace \right)
\end{equation}
for parameters $\BackgrLevelDiffRelMax>0$, $\BackgrLevelDiffAbsMin>0$, and $\BackgrLevelDecileMax\in\lbrace 1,\ldots,9 \rbrace$ and set the function for computing the background level to
\begin{equation}
    \label{eq:backgr_level_parameter}
    \BackgrLevel(\A_0) \defas m_{l^*}.
\end{equation}

The intensity threshold~$\IntThresh$ is then selected as a value between the background level~$\BackgrLevel$ and the maximum bucket median~$m_{10}$ on a linear scale by the parameter~$\IntThreshFact$ that was introduced in the Results section.
Therefore, we define
\begin{equation}
    \label{eq:intens_thresh_param}
    \IntThresh(\A_0) \defas \BackgrLevel(\A_0) + \IntThreshFact \cdot (m_{10}-\BackgrLevel(\A_0)).
\end{equation}
Note that this definition is, by the meaning of~$m_{10}$, robust to outliers in~$\A_0$.

%%%%%%%%%%%%%%%%%%%%%%
\subsubsection*{Initial measurement locations}
%%%%%%%%%%%%%%%%%%%%%%
The initial measurement locations~$\x_1,\ldots,\x_{J_0}\in\Xset^*$ are deterministically arranged as a certain grid.
It is chosen to be a variant of a regular grid in which every other row (or column) of points is shifted to the center of surrounding points (Supplementary~Fig.~\ref{fig:init_locs}).
The intensity observations start in the bottom left corner of~$\Xset^*$ and then continue row by row.
Initial locations not reachable by the instrument, \ie outside of~$\Xset^*$, are skipped.
If all initial locations are reachable, we use a total number of
\begin{equation}
    \label{eq:def_J0}
    J_0 \defas \frac{\Nrow^2+1}{2},
\end{equation}
where $\Nrow\in\N$ is the odd number of rows in the grid.

%%%%%%%%%%%%%%%%%%%%%%
\subsubsection*{Consumed areas}
%%%%%%%%%%%%%%%%%%%%%%
As placing measurement points too close to each other has limited benefit for an efficient experiment, we mark areas around each measurement location~$\x_j\in\Xset^*$ as \apos{consumed} and ignore them as potential locations for new measurement points.
Since the resolution function of a TAS yields ellipsoids in $\Q$-$E$~space, it is natural to consider consumed areas in~$\Xset^*$ as ellipses.
An ellipse with center point~$\x_j\in\Xset^*$ is defined as
\begin{equation}
    \label{eq:ellipse}
    \Ell(\x_j) \defas \lbrace \x\in\R^n \| (\x-\x_j)\tr E(\x_j) (\x-\x_j) \leq 1 \rbrace
\end{equation}
for a matrix-valued function
\begin{equation}
    \label{eq:ellipse_matrix_function}
    E(\x_j) = U(\x_j)R(\x_j)^{-\top}R(\x_j)\inv U(\x_j)\tr = U(\x_j)R(\x_j)^{-2}U(\x_j)\tr \in \R^{n\times n},
\end{equation}
where~$U(\x_j)\in\R^{n\times n}$ is a rotation matrix and $R(\x_j)=\diag{r_1(\x_j),\ldots,r_n(\x_j)}\in\R^{n\times n}$ with $r_k>0$.
Then, the union of all ellipses at step~$J$ is denoted by
\begin{equation}
    \Ell_J \defas \bigcup_{j=1}^J \Ell(\x_j).
\end{equation}
It is included in the objective function~$\obj_J$ which is part of the final algorithm.

%%%%%%%%%%%%%%%%%%%%%%%%%%%%%%%%%%%%%%%%%%%%
\phantomsection
\subsection*{Final algorithm}
\addcontentsline{toc}{subsection}{Final algorithm}
%%%%%%%%%%%%%%%%%%%%%%%%%%%%%%%%%%%%%%%%%%%%
Incorporating all of the discussed methodological components, the steps for the final algorithm are listed in Box~\ref{alg:final}.
\begin{algorithm}
    \caption{Final algorithm}
    \label{alg:final}
    \DontPrintSemicolon
    \KwData{odd number of rows~$\Nrow$ for initialization grid,
    parameters $\BackgrLevelDiffRelMax,\BackgrLevelDiffAbsMin,\BackgrLevelDecileMax$, and~$\IntThreshFact$ for estimating the background level and intensity threshold,
    matrix-valued function~$E(\x)$ for elliptic consumed areas,
    objective function~$\obj_J$,
    kernel~$\kappa_\theta$ and optimizer for hyperparameters~$\theta$ (Eq.~\eqref{eq:marginal_likelihood}),
    stopping criterion for kernel optimizations~$\stopCritKO=\stopCritKO(J)$,
    cost measure~$c=c(\A)$,
    cost budget~$C\geq0$}
    Observe noisy intensities~$\IntMeas_1,\ldots,\IntMeas_{J_0}$ at initial locations~$\x_1,\ldots,\x_{J_0}$ \;
    $\A_0=((\x_1,\IntMeas_1), \ldots, (\x_{J_0},\IntMeas_{J_0}))$ \;
    Optimize hyperparameters~$\theta_{J_0}$ of kernel~$\kappa_{J_0}$ \;
    Compute background level~$\BackgrLevel=\BackgrLevel(\A_0)$ and intensity threshold~$\IntThresh=\IntThresh(\A_0)>\BackgrLevel$ \;
    $\A=\A_0$ \;
    $J=J_0$ \;
    \While{$c(\A) < C$}{
        Determine next measurement location~$\x_{J+1} \defas \argmax_{\x\in\Xset^*} \obj_J(\x)$ \;
        Observe noisy intensity~$\IntMeas_{J+1}$ at~$\x_{J+1}$ \;
        \eIf{kernel optimizations stopped}{
            $\theta_{J+1} = \theta_J$ \;
        }{
            Optimize hyperparameters~$\theta_{J+1}$ of kernel~$\kappa_{J+1}$ \;
            \If{$\stopCritKO(J+1)$}{
                Stop kernel optimizations \;
            }
        }
        $\A \gets ((\x_1,\IntMeas_1), \ldots, (\x_J,\IntMeas_J), (\x_{J+1},\IntMeas_{J+1}))$ \;
        $J \gets J+1$ \;
    }
\end{algorithm}
Required components that have not been mentioned above are described in the next paragraphs.
The algorithmic setting, \ie particular values for parameters of the algorithm, is specified below.

%%%%%%%%%%%%%%%%%%%%%%
\subsubsection*{Objective function}
%%%%%%%%%%%%%%%%%%%%%%
Recall that the objective function~$\obj_J$, which is supposed to indicate the next measurement location, is composed of the acquisition function~$\acq_J$ from Eq.~\eqref{eq:def_acq_fct} and the cost function~$c_J$ from Eq.~\eqref{eq:def_cost_fct}.
In order to avoid distorting the objective function with physical units of time, we use the normalized cost function
\begin{equation}
    \label{eq:cost_function_normalized}
    \overline{c}_J(\x) \defas \frac{c_J(\x)}{c_0},
\end{equation}
where~$c_0>0$ is a normalizing cost value and set to the maximum distance between two initial measurement locations \wrt the metric~$d$, \ie
\begin{equation}
    c_0 \defas \max_{1\leq j,j'\leq J_0} d(\x_j,\x_{j'}).
\end{equation}
The objective function is then defined as
\begin{equation}
    \obj_J(\x) \defas
        \begin{cases}
            \acq_J(\x) / (\overline{c}_J(\x) + 1) & \textnormal{if $\x\not\in\Ell_J$}, \\
            -1 & \textnormal{otherwise}.
        \end{cases}
\end{equation}
Observe that~$\obj_J$ excludes consumed areas in~$\Ell_J$ as potential locations for new observations.
Outside~$\Ell_J$, it reflects the fact that the objective function should increase if the cost function decreases and vice versa.
Also, if there were no costs present, \ie $c_J\equiv0$, then~$\obj_J=\acq_J$ outside~$\Ell_J$.

%%%%%%%%%%%%%%%%%%%%%%
\subsubsection*{Kernel and optimizer for hyperparameters}
%%%%%%%%%%%%%%%%%%%%%%
The parameterized kernel~$\kappa_\theta$ is chosen to be the Gaussian (or radial basis function, RBF) kernel, \ie
\begin{equation}
    \label{eq:gauss_kernel}
    \kappa_\theta(\x,\x') \defas \sigma^2 \exp\left( -\frac{1}{2} (\x-\x')\tr \Lambda\inv (\x-\x') \right),
\end{equation}
where~$\sigma^2>0$ and $\Lambda=\diag{\lambda_1,\ldots,\lambda_n}\in\R^{n\times n}$ for length scales~$\lambda_k\geq0$.
Hence, the vector of hyperparameters is
\begin{equation}
    \label{eq:vec_kernel_hyperp}
    \theta=(\sigma^2,\lambda_1,\ldots,\lambda_n)\tr\in\R^{n+1}.
\end{equation}
Recall that the kernel is needed for GPR to fit a Gaussian process to the logarithm of intensity observations.
As mentioned above, we compute optimal hyperparameters by maximizing the logarithm of the marginal likelihood (Eq.~\eqref{eq:marginal_likelihood}).
Since this optimization problem is non-convex in general, it might have several local maxima.
Instead of a global optimizer, we run local optimizations starting with $\numKernelRestart\in\N$~different initial hyperparameter values distributed uniformly at random in a hypercube~$\cubeHyperparam\subseteq\R^{n+1}$ and choose the one with the largest local maxima.
Note that this introduces stochasticity into our approach.

%%%%%%%%%%%%%%%%%%%%%%
\subsubsection*{Stopping criterion for kernel optimizations}
%%%%%%%%%%%%%%%%%%%%%%
As kernel optimizations are computationally the most expensive part of our methodology, it is reasonable to stop them once a certain criterion is met.
The stopping criterion~$\stopCritKO=\stopCritKO(J)$ is formalized as a predicate, \ie a boolean-valued function, depending on step~$J$.
If $\numKO(J)\in\N$ denotes the number of kernel optimizations performed up until step~$J$, it is defined as
\begin{align}
    \label{eq:stop_crit_KO}
    \begin{split}
        &\stopCritKO(J) \defas \numKO(J) > \numKOmax \\
        &\quad\quad \vee \left( \numKO(J) \geq \numKOmin \;\;\wedge\;\; \frac{1}{\numLastKO-1}\sum_{j=J-\numLastKO+1}^{J-1} \frac{\norm{\log(\theta_j)-\log(\theta_{j+1})}_2}{\norm{\log(\theta_{j+1})}_2} \leq \epsKO \right)
    \end{split}
\end{align}
for parameters~$\numKOmin,\numKOmax,\numLastKO\in\N$ such that
\begin{equation}
    2 \leq \numLastKO \leq \numKOmin \leq \numKOmax
\end{equation}
and~$\epsKO>0$.
Informally, this predicate indicates that kernel optimizations should be stopped as soon as~$\numKO(J)$ exceeds a given maximum number~$\numKOmax$ or if, provided that~$\numKO(J)$ exceeds a given minimum number~$\numKOmin$, the average relative difference of the last~$\numLastKO$ hyperparameters falls below a given threshold value~$\epsKO$, \ie the hyperparameters stagnate and do no longer change substantially.
Note that, in Eq.~\eqref{eq:stop_crit_KO}, the expressions~$\log(\theta)$ for the vector of kernel hyperparameters~$\theta=(\sigma^2,\lambda_1,\ldots,\lambda_n)\tr$ from Eq.~\eqref{eq:vec_kernel_hyperp} are meant componentwise, \ie
\begin{equation}
    \log(\theta) = (\log(\sigma^2),\log(\lambda_1),\ldots,\log(\lambda_n))\tr \in \R^{n+1}.
\end{equation}

%%%%%%%%%%%%%%%%%%%%%%
\subsubsection*{Cost measure}
%%%%%%%%%%%%%%%%%%%%%%
Finally, the cost measure~$c$ is chosen to represent experimental time, \ie the total time needed to carry out an experiment~$\A$.
Experimental time consists of the cumulative counting time and the cumulative time for moving the instrument axes.
The cumulative counting time is measured by
\begin{equation}
    \label{eq:cost_count}
    \CostCount(\A) \defas \sum_{j=1}^{\abs{\A}} {\Tcount}_{,j}
\end{equation}
where~${\Tcount}_{,j}\geq0$ denotes the single counting time, \ie the time spent for a single intensity observation, at~$\x_j$.
The cost measure for the cumulative time spent to move the instrument axes is defined as
\begin{equation}
    \label{eq:cost_axes_mov}
    \CostAxesMov(\A) \defas \sum_{j=1}^{\abs{\A}-1} d(\x_j,\x_{j+1}),
\end{equation}
where~$d$ is a metric representing the cost for moving from~$\x_j$ to~$\x_{j+1}$.
For details, we refer to Supplementary Note~\ref{si-sec:setting_tas} (Eq.~\eqref{eq:def_metric}).
Eventually, we set
\begin{equation}
    \label{eq:cost_measure}
    c(\A) \defas \CostCount(\A) + \CostAxesMov(\A).
\end{equation}
For simplicity, the single counting times are assumed to be constant on the entire domain~$\Xset^*$, \ie ${\Tcount}_{,j} = \Tcount \geq 0$ yielding
\begin{equation}
    \CostCount(\A) = \abs{\A} \cdot \Tcount.
\end{equation}

%%%%%%%%%%%%%%%%%%%%%%%%%%%%%%%%%%%%%%%%%%%%
\phantomsection
\subsection*{Degenerate cases}
\addcontentsline{toc}{subsection}{Degenerate cases}
%%%%%%%%%%%%%%%%%%%%%%%%%%%%%%%%%%%%%%%%%%%%
An intensity function that is nearly constant along a certain coordinate~$x_k$ in~$\Xset$, \ie an intrinsically lower-dimensional function, might cause problems for the Gaussian kernel from Eq.~\eqref{eq:gauss_kernel} as the corresponding optimal length scale hyperparameter would be~$\lambda_k=\infty$.
Also, the initial observations~$\A_0$ from Eq.~\eqref{eq:init_exp} might not resolve the main characteristics of the intensity function sufficiently well and hence pretend it to be lower-dimensional.

Most degenerate cases can be identified by kernel optimizations resulting in one or more length scales that are quite low or high relative to the dimensions of~$\Xset$.
In general, we assess a length scale parameter~$\lambda_k$ as degenerate if it violates
\begin{equation}
    \label{eq:degen_inequ}
    \DegenFactMin \leq \frac{\lambda_k}{x_k^+-x_k^-} \leq \DegenFactMax
\end{equation}
for two parameters~$0<\DegenFactMin<\DegenFactMax<\infty$, where~$x_k^-$ and~$x_k^+$ denote the limits of the rectangle~$\Xset$ in dimension~$k$.
If, after kernel optimization at a certain step, a length scale parameter is recognized to be degenerate, we regard the intensity function on a coordinate system rotated by~$45^\circ$ in order to avoid the mentioned problems.
Of course, the rotation is performed only internally and does not affect the original setting.

In~$n=2$ dimensions, a crystal field excitation, for example, might induce a lower-dimensional intensity function (Supplementary~Fig.~\ref{fig:rotat}a).
After rotating the coordinate system, the intensity function becomes full-dimensional (Supplementary~Fig.~\ref{fig:rotat}b) allowing non-degenerate kernel optimizations.

%%%%%%%%%%%%%%%%%%%%%%%%%%%%%%%%%%%%%%%%%%%%
\phantomsection
\subsection*{Algorithmic setting}
\addcontentsline{toc}{subsection}{Algorithmic setting}
%%%%%%%%%%%%%%%%%%%%%%%%%%%%%%%%%%%%%%%%%%%%
All experiments described in this article are performed in $n=2$~dimensions, \ie $\Xset\subseteq\R^2$.
If not specified otherwise, we use $\Nrow=11$ rows corresponding to 61~measurements in the initialization grid (Eq.~\eqref{eq:def_J0}).
For scenario~2 of the neutron experiment, we use~$\Nrow=7$ rows corresponding to 25~initial measurements.
The default parameter values used for the final algorithm (Box~\ref{alg:final}) in both experimental settings, \ie the neutron experiment and the benchmark, are specified in Table~\ref{tab:parameter_values_default}.
\begin{table}
    \footnotesize
    \centering
    \begin{tabular}{l||c|c|c|c|c|c|c|c|c|c|c|c|c}
        \textbf{Parameter} & $\Nrow$ & $\BackgrLevelDiffRelMax$ & $\BackgrLevelDiffAbsMin$ & $\BackgrLevelDecileMax$ & $\IntThreshFact$ & $\numKernelRestart$ & $\cubeHyperparam$ & $\numKOmin$ & $\numKOmax$ & $\numLastKO$ & $\epsKO$ & $\DegenFactMin$ & $\DegenFactMax$ \\ \hline\hline
        \textbf{Value} & 11 & 0.5 & 15 & 6 & 0.5 & $100$ & $[10^{-3},10^2]^3$ & $25$ & $75$ & $9$ & $0.025$ & $10^{-3}$ & $1$
    \end{tabular}
    \caption{Default parameter values used for the final algorithm.}
    \label{tab:parameter_values_default}
\end{table}
Although NICOS is able to consider an instrument's resolution function for the computation of matrices~$E(\x_j)$ defining ellipses as consumed areas, we decided to use ellipses fixed over~$\Xset^*$ for both, the neutron experiment and the benchmark.
Thus, the function~$E$ (Eq.~\eqref{eq:ellipse_matrix_function}) is chosen to give circles with fixed radius~$r>0$ on a normalized domain, \ie $U(\x_j)=I$ and
\begin{equation}
    \label{eq:ellipse_radius_normalized}
    r_k(\x_j) = \frac{r}{x_k^+-x_k^-}.
\end{equation}
We set~$r=0.02$ for the neutron experiment and~$r=0.025$ for the benchmark.

%%%%%%%%%%%%%%%%%%%%%%%%%%%%%%%%%%%%%%%%%%%%%%%%%%%%%%%%%%%%%%%%%%%%%%%%%%%%%%%%%%%%%%%%
\phantomsection
\section*{Data availability}
\addcontentsline{toc}{section}{Data availability}
%%%%%%%%%%%%%%%%%%%%%%%%%%%%%%%%%%%%%%%%%%%%%%%%%%%%%%%%%%%%%%%%%%%%%%%%%%%%%%%%%%%%%%%%
Source data for figures related to the neutron experiment or the benchmark are available in the repository at \href{https://jugit.fz-juelich.de/ainx/ariane-paper-data}{\nolinkurl{jugit.fz-juelich.de/ainx/ariane-paper-data}}.

%%%%%%%%%%%%%%%%%%%%%%%%%%%%%%%%%%%%%%%%%%%%%%%%%%%%%%%%%%%%%%%%%%%%%%%%%%%%%%%%%%%%%%%%
\phantomsection
\section*{Code availability}
\addcontentsline{toc}{section}{Code availability}
%%%%%%%%%%%%%%%%%%%%%%%%%%%%%%%%%%%%%%%%%%%%%%%%%%%%%%%%%%%%%%%%%%%%%%%%%%%%%%%%%%%%%%%%
The software implementation of our approach is based on the \textsf{GaussianProcessRegressor} class from the Python package \textsf{scikit-learn}~\cite{pedregosa2011scikit}.
All results can be reproduced using our code from the repository \href{https://jugit.fz-juelich.de/ainx/ariane}{\nolinkurl{jugit.fz-juelich.de/ainx/ariane}} (commit SHA: \href{https://jugit.fz-juelich.de/ainx/ariane/-/commit/c1c31c964c51bdb063aec98609e9e95a0614361c}{\texttt{c1c31c96}}).
The benchmark results can be reproduced by using code from the repository \href{https://jugit.fz-juelich.de/ainx/base-fork-ariane}{\nolinkurl{jugit.fz-juelich.de/ainx/base-fork-ariane}} (commit SHA: \href{https://jugit.fz-juelich.de/ainx/base-fork-ariane/-/commit/3715a772beb8e93d7fa5ed9dc72e8cb8584ffd26}{\texttt{3715a772}}).
It is a fork, adjusted to our approach, of the benchmark API from \href{https://jugit.fz-juelich.de/ainx/base}{\nolinkurl{jugit.fz-juelich.de/ainx/base}} which is part of the mentioned original work on the benchmarking procedure~\cite{teixeiraparente2022benchmarking}.

\phantomsection
\printbibliography[heading=bibintoc]

%%%%%%%%%%%%%%%%%%%%%%%%%%%%%%%%%%%%%%%%%%%%%%%
\phantomsection
\section*{Acknowledgements}
\addcontentsline{toc}{section}{Acknowledgements}
%%%%%%%%%%%%%%%%%%%%%%%%%%%%%%%%%%%%%%%%%%%%%%%
This work was supported through the project \textit{Artificial Intelligence for Neutron and X-ray scattering} (AINX) funded by the Helmholtz AI cooperation unit of the German Helmholtz Association.

We thank Yuliia Tymoshenko~(KIT) for providing simulations on~ZnCr$_2$Se$_4$ and~FeP as well as Anup Bera and Bella Lake~(both HZB) for the same on~SrCo$_2$V$_2$O$_8$.
Furthermore, we gratefully thank Frank Weber~(KIT) who provided the SnTe sample and related information for the neutron experiment at~EIGER.
We would also like to thank Kuangdai Leng~(STFC) for his suggestion to include a schematic representation of our approach~(Fig.~\ref{fig:framework}).
Eventually, we acknowledge the discussions and collaborations with Marcus Noack~(LBNL) and Martin Boehm~(ILL).

%%%%%%%%%%%%%%%%%%%%%%%%%%%%%%%%%%%%%%%%%%%%%%%
\phantomsection
\section*{Author contributions}
\addcontentsline{toc}{section}{Author contributions}
%%%%%%%%%%%%%%%%%%%%%%%%%%%%%%%%%%%%%%%%%%%%%%%
M.G. and A.S. initiated and supervised the research.
M.T.P. developed the methodology.
M.T.P. and G.B. designed and developed the software components.
M.T.P., G.B., C.F., U.S., and A.S. performed the experiments.
M.T.P. wrote the first draft of the manuscript.
All authors revised drafts of the manuscript.

%%%%%%%%%%%%%%%%%%%%%%%%%%%%%%%%%%%%%%%%%%%%%%%
\phantomsection
\section*{Competing interests}
\addcontentsline{toc}{section}{Competing interests}
%%%%%%%%%%%%%%%%%%%%%%%%%%%%%%%%%%%%%%%%%%%%%%%
The authors declare no competing interests.

\newpage
\phantomsection
{\noindent \LARGE \bfseries Supplementary Information}
\addcontentsline{toc}{section}{Supplementary Information}
\setcounter{section}{0}
\renewcommand{\figurename}{Supplementary Figure}
\setcounter{figure}{0}
\captionsetup[table]{name=Supplementary Table}
\setcounter{table}{0}

%%%%%%%%%%%%%%%%%%%%%%%%%%%%%%%%%%%%%%%%%%%%%%%
\phantomsection
\section*{Supplementary Note 1\quad TAS setting}
\addcontentsline{toc}{subsection}{Supplementary Note 1}
\label{si-sec:setting_tas}
%%%%%%%%%%%%%%%%%%%%%%%%%%%%%%%%%%%%%%%%%%%%%%%
In TAS experiments, intensities are observed at a certain point in the $\Q$-$E$~space of the investigated material by counting scattered neutrons on a detector device.
The momentum space~$\Q$ is the Fourier transform of an initial periodic spatial lattice depending on the material and provides wavevector coordinates $\q=(h,k,l)\tr\in\Q$ (Miller indices) in relative lattice units (r.l.u.) to move within it.
The one-dimensional energy space~$E$ provides a coordinate~$\en\in E$, mostly measured in units of milli-electron volts (meV), to describe energy transfer.
Neglecting units, $\Q$-$E$~space can be thought of as a four-dimensional real space, \ie $(\q,\en)\tr\in\R^r$, $r=4$.

TAS experiments are, however, often not carried out in full-dimensional $\Q$-$E$~space but on a lower-dimensional (often two-dimensional) hyperplane.
The coordinates on the corresponding hyperplane in $\Q$-$E$~space are denoted by $\x=(x_1,\ldots,x_n)\tr\in\R^n$, $n\leq r$, and the transformation $T:\R^n\to\R^r$ is defined as
\begin{equation}
    \label{eq:qe_x}
    \begin{pmatrix}\q \\ \en\end{pmatrix} = W\x + \bb \asdef T(\x)
\end{equation}
for a full-rank transformation matrix~$W\in\R^{r\times n}$ and an offset~$\bb\in\R^r$.
Most often, the transformation matrix has the shape
\begin{equation}
    W = \begin{pmatrix}* & 0 \\ * & 0 \\ * & 0 \\ 0 & 1\end{pmatrix}
\end{equation}
meaning that a two-dimensional hyperplane is spanned by a certain direction in $\Q$~space and energy transfer.
The re-transformation $T\inv:\R^r\to\R^n$, an orthogonal projection onto the hyperplane, is then given by
\begin{equation}
	\label{eq:x_qe}
	\x = (W\tr W)\inv W\tr\left[ \begin{pmatrix}\q \\ \en\end{pmatrix}-\bb \right] \asdef T\inv(\q,\en).
\end{equation}
Note that~$T\inv(T(\x))=\x$ for each~$\x\in\R^n$, but $T(T\inv(\q,\en))=(\q,\en)\tr$ only for~$(\q,\en)\tr\in T(\R^n)$.

The rectangular set~$\Xset\subseteq\R^n$, becoming the domain of the intensity function, is defined as
\begin{equation}
    \Xset \defas [x_1^-,x_1^+] \times \cdots \times [x_n^-,x_n^+]
\end{equation}
for given limits of investigation~$x_k^\pm\in\R$, $k\in\lbrace 1,\ldots,n \rbrace$.
For example, in Fig.~\ref{fig:problem}, we have $(x_1^-,x_1^+,x_2^-,x_2^+)=(2.3,3.3,2.5,5.5)$.

In order to observe intensities at a certain location in $\Q$-$E$~space, a TAS needs to move its main axes to corresponding angles.
In the constant-$k_f$ mode we used for our experimental setups, it is enough to regard a subset of three angles and their angular velocities~$\vangvelo=(v_1,v_2,v_3)\tr\in[0,\infty)^3$.
For the connection between points in $\Q$-$E$~space and their respective angles of the instrument axes, we formally use an angle map
\begin{equation}
	\label{eq:angle_map}
	\AngleMap : \dom{\AngleMap} \to [0,\pi)^3, \; (\q,e) \mapsto (\Psi_1(\q,e),\Psi_2(\q,e),\Psi_3(\q,e))\tr
\end{equation}
with domain~$\dom{\AngleMap}\subseteq\R^r$ containing points in $\Q$-$E$~space for which~$\AngleMap$ is well-defined, \ie points reachable by the instrument.
Translating this domain to the lower-dimensional coordinates, we define
\begin{equation}
    \Xset^* \defas T\inv(\dom{\AngleMap} \, \cap \, T(\Xset))
\end{equation}
as the set of points~$\x\in\Xset$ such that, by abuse of notation, the map
\begin{equation}
    \AngleMap(\x) \defas \AngleMap(T(\x))
\end{equation}
is well-defined.
We assume that the limits of investigation are set such that $\AngleMap:\Xset^*\to[0,\pi)^3$ becomes an injective function.

The sample and its orientation induce a scattering function~$s : T(\Xset) \to [0,\infty)$ describing intensities theoretically present.
For~$\x\in\Xset$, we define again
\begin{equation}
    s(\x) \defas s(T(\x)).
\end{equation}
The scattering function is not directly accessible due to limits in the instrument resolution.
If we denote the resolution function by~$\ResolFunc : T(\Xset^*) \to ( T(\Xset^*) \to [0,\infty) )$ and again define
\begin{equation}
    \ResolFunc(\x) \defas \ResolFunc(T(\x))
\end{equation}
for~$\x\in\Xset^*$, we eventually get the intensity function~$i : \Xset^* \to [0,\infty)$ as the convolution of~$s$ with~$\ResolFunc$, \ie
\begin{equation}
    i(\x) \defas (s * \ResolFunc(\x))(\x).
\end{equation}

Finally, the cost for moving the instrument axes, \ie changing their angles, from a certain location~$\x\in\Xset^*$ to another~$\x'\in\Xset^*$ is formalized by the metric~$d : \Xset^* \times \Xset^* \to [0,\infty)$ defined as
\begin{equation}
    \label{eq:def_metric}
	d(\x,\x') \defas \max_{k\in\lbrace1,2,3\rbrace} \left\lvert\frac{\Psi_k(\x)-\Psi_k(\x')}{v_k}\right\rvert,
\end{equation}
where~$\Psi_k$ are components of the angle map from Eq.~\eqref{eq:angle_map} and~$v_k$ denote the corresponding angular velocities of the instrument.
It indicates the maximum time needed changing all angles in parallel.
Note that~$d$ is indeed a metric in the mathematical sense since the angle map is chosen to be injective.

\cleardoublepage\phantomsection

%%%%%%%%%%%%%%%%%%%%%%%%%%%%%%%%%%%%%%%%%%%%%%%
\section*{Supplementary Note 2\quad Setup for neutron experiment}
\addcontentsline{toc}{subsection}{Supplementary Note 2}
\label{si-sec:setup_neutron_experim}
%%%%%%%%%%%%%%%%%%%%%%%%%%%%%%%%%%%%%%%%%%%%%%%
To prepare the workflow for a neutron experiment at a TAS, we tested real space movements of instrument axes as well as the communication between the software implementation of our approach and the instrument control system NICOS during dry runs, \ie without neutrons, at the cold TAS PANDA~(MLZ)~\cite{schneidewind2015panda} for several excitations.
Having a lack of neutron beam time in Europe currently, we are grateful for the granted beam time at the thermal TAS EIGER (PSI) \cite{stuhr2017thermal} making a real experiment possible to finally demonstrate the usefulness and benefits of our approach.

The experimental setup at EIGER was as follows.
We oriented a SnTe sample (space group~225) in the (hhl)~scattering plane and mounted it in a closed cycle cryostat for background decrease, even though we have been measuring at room temperature.
EIGER was operated in constant-$k_f$ mode ($k_f=2.66$~\AA$^{-1}$) with a PG filter on~$k_f$, a double-focusing PG002 monochromator, and a horizontally focusing PG002 analyzer.
In scenarios~1 and~2, we counted neutrons on the detector device at each measurement location in $\Q$-$E$~space until $100,000$~neutrons were counted on the monitor device, whereas in scenario~3, we counted for $40,000$ (initialization) and $50,000$ (after initialization) monitor counts, respectively.

For scenarios~1 and~2, due to the coupling of~LA and~TO, with an additional TA~mode, the intensity distribution provides an ideal setting including real background, strong and weak signals, and symmetry breaking in intensity along the $\Q$~direction.

\cleardoublepage\phantomsection

%%%%%%%%%%%%%%%%%%%%%%%%%%%%%%%%%%%%%%%%%%%%%%%
\section*{Supplementary Note 3\quad Milestone values for benchmark}
\addcontentsline{toc}{subsection}{Supplementary Note 3}
\label{si-sec:milestone_values_base}
%%%%%%%%%%%%%%%%%%%%%%%%%%%%%%%%%%%%%%%%%%%%%%%
The milestone values used for the benchmark are given in Supplementary Table~\ref{tab:milestone_values_base}.
Recall that, for each test case, they are determined by the four stages~(I-IV) of the grid approach (Supplementary~Fig.~\ref{fig:base_grid_experiment_path}).
The intensity functions for all test cases are displayed in Supplementary~Fig.~\ref{fig:base_test_cases} a)-t).
\begin{table}[H]
    \centering
    \begin{tabular}{c||cccc}
        \multirow{2}{*}{Test case} & \multicolumn{4}{c}{Milestone values in hours}  \\ \cline{2-5} 
         & I & II & III & IV \\ \hline\hline
        a) & 2.28 & 4.28 & 6.27 & 8.02 \\ \hline
        b) & 2.27 & 4.27 & 6.25 & 8.00 \\ \hline
        c) & 2.23 & 4.45 & 6.66 & 8.90 \\ \hline
        d) & 2.37 & 4.46 & 6.53 & 8.34 \\ \hline
        e) & 2.17 & 4.27 & 6.42 & 8.57 \\ \hline
        f) & 2.21 & 4.41 & 6.61 & 8.82 \\ \hline
        g) & 2.18 & 4.37 & 6.55 & 8.73 \\ \hline
        h) & 2.19 & 4.38 & 6.57 & 8.76 \\ \hline
        i) & 2.21 & 4.42 & 6.63 & 8.84 \\ \hline
        j) & 2.17 & 4.34 & 6.50 & 8.67 \\ \hline
        k) & 2.18 & 4.36 & 6.53 & 8.71 \\ \hline
        l) & 2.21 & 4.43 & 6.64 & 8.85 \\ \hline
        m) & 2.16 & 4.32 & 6.48 & 8.64 \\ \hline
        n) & 2.23 & 4.47 & 6.70 & 8.93 \\ \hline
        o) & 2.17 & 4.34 & 6.51 & 8.69 \\ \hline
        p) & 2.15 & 4.30 & 6.45 & 8.60 \\ \hline
        q) & 2.15 & 4.30 & 6.45 & 8.60 \\ \hline
        r) & 1.80 & 3.58 & 5.36 & 7.14 \\ \hline
        s) & 1.80 & 3.34 & 4.87 & 6.18 \\ \hline
        t) & 1.74 & 3.47 & 5.20 & 6.93
    \end{tabular}
    \caption{Milestone values for each benchmark test case.}
    \label{tab:milestone_values_base}
\end{table}

\cleardoublepage\phantomsection

%%%%%%%%%%%%%%%%%%%%%%%%%%%%%%%%%%%%%%%%%%%%%%%
\section*{Supplementary Note 4\quad Comparison with \gpcam}
\addcontentsline{toc}{subsection}{Supplementary Note 4}
\label{si-sec:comp_gpcam}
%%%%%%%%%%%%%%%%%%%%%%%%%%%%%%%%%%%%%%%%%%%%%%%
We emphasize that the following comparison was made based on all the information available to us about \gpcam and its correct use.
Since \gpcam, as mentioned in the Results section, does not specify how to choose the acquisition function in the TAS setting, we use the same as that used in a former neutron experiment at ThALES~\cite[Eq.~(6)]{noack2021gaussian}, \ie
\begin{equation}
    \label{eq:acq_fct_gpcam}
    \acq_J(\x) = \tilde{\sigma}_J(\x) + 3 \, \tilde{m}_J(\x) \, \tilde{\sigma}_J(\x),
\end{equation}
where~$\tilde{m}_J$ and~$\tilde{\sigma}_J$ denote the posterior mean and, respectively, standard deviation function of the Gaussian process that \gpcam fits to intensity observations directly, \ie to original data without any transformation.
Note that we only used this one acquisition function~(Eq.~\eqref{eq:acq_fct_gpcam}).
Other acquisition functions may yield fundamentally different results.

Our implementation using version 7.4.4 of \gpcam can be accessed at the repository \href{https://jugit.fz-juelich.de/ainx/base-fork-ariane}{\nolinkurl{jugit.fz-juelich.de/ainx/base-fork-ariane}} (branch: \href{https://jugit.fz-juelich.de/ainx/base-fork-ariane/-/tree/gpcam-test}{\nolinkurl{gpcam-test}}, commit SHA: \href{https://jugit.fz-juelich.de/ainx/base-fork-ariane/-/commit/f9acd9a699d17d8c771b7ea8316ec83ef7f2d03b}{\nolinkurl{f9acd9a6}}, file: \href{https://jugit.fz-juelich.de/ainx/base-fork-ariane/-/blob/gpcam-test/tas/approaches/_gpcam.py}{\nolinkurl{tas/approaches/\_gpcam.py}}).

As with our approach, we ran \gpcam for each benchmark test case in different variants and performed 100~repetitions with different random seeds for each to see the variability of their results caused by stochastic components.
The variants, differing only in the way of initializing the Gaussian process and considering the cost function~$c_J$~(Eq.~\eqref{eq:def_cost_fct}), were:
\begin{enumerate}
    \item initialization uniformly at random and considering~$c_J$ (default), \label{it:gpcam_var1}
    \item initialization uniformly at random and ignoring~$c_J$, \label{it:gpcam_var2}
    \item initialization with the grid used for our approach (Supplementary~Fig.~\ref{fig:init_locs}) and considering~$c_J$. \label{it:gpcam_var3}
\end{enumerate}

Examples of particular experiments performed by \gpcam in the default setting (variant~\ref{it:gpcam_var1}) for each test case are depicted in Supplementary~Fig.~\ref{fig:base_experiments_gpcam}.
We see that the quality of experiments for most spin wave test cases (Supplementary~Fig.~\ref{fig:base_experiments_gpcam}f-q) is rather poor.
In fact, it can be observed that \gpcam places measurement points near the edges and very close to each other after initialization and thus does not detect regions of signal at all in these test cases.
There are, however, some test cases where \gpcam performs as expected to some extent, but all of them share the commonality of having relatively large regions of signal.
Interestingly, the target intensity function underlying Supplementary~Fig.~\ref{fig:base_experiments_gpcam}d (Supplementary~Fig.~\ref{fig:base_test_cases}d), for which \gpcam performed reasonably well but failed to detect the region of weak signal, bears a fairly strong resemblance to the intensity function from the neutron experiment at ThALES~\cite[Fig.~4e]{noack2021gaussian}.
Note that also the other two variants (variants~\ref{it:gpcam_var2} and~\ref{it:gpcam_var3}) perform experiments of a similar poor quality in this benchmark.

Not surprisingly, the poor quality of \gpcam's particular experiments displayed in Supplementary~Fig.~\ref{fig:base_experiments_gpcam} is also seen in the quantitative benchmark results, which we provide in Supplementary~Fig.~\ref{fig:base_results_ariane_gpcam}, analogous to Fig.~\ref{fig:base_results}, for the sake of completeness.
We observe that our approach performs substantially better in each test case than all \gpcam variants.
In particular, the results of variants with random initialization (variants~\ref{it:gpcam_var1} and~\ref{it:gpcam_var2}) show not only a poor median average but also a large variability.
The variant with the same initialization grid as our approach (variant~\ref{it:gpcam_var3}) performs better than the other two, but this is only due to the fact that the initialization grid already covers the whole domain of interest~$\Xset$.
In fact, this variant does not significantly improve the benefit of its experiments after initialization, especially in the spin wave test cases mentioned above.

\cleardoublepage\phantomsection

%%%%%%%%%%%%%%%%%%%%%%%%%%%%%%%%%%%%%%%%%%%%%%%
\section*{Supplementary Note 5\quad Further results from neutron experiment}
\addcontentsline{toc}{subsection}{Supplementary Note 5}
\label{si-sec:results_neutron_experim}
%%%%%%%%%%%%%%%%%%%%%%%%%%%%%%%%%%%%%%%%%%%%%%%
We have additionally tested our approach in the setting of scenario~1 with different values for the background level and intensity threshold.
The results (Supplementary~Fig.~\ref{fig:eiger_scen1_compar_2022-05}) further support the claim that our approach identifies regions of signal successfully and that a change in the intensity threshold (from~$\IntThresh_3=90$ to~$\IntThresh_4=130$, both with~$\BackgrLevel_3=\BackgrLevel_4=45$) does not have significant influence on the final outcome.
On the contrary, we can see again that the particular value of the intensity threshold influences the width of the branches the measurements are placed on, which additionally substantiates our claim of interpretability and explainability (see Discussion section).

In scenario~3, we have investigated the material SnTe on~$\Xset=[0,1]\times[1,12]$ along the $\Q$ direction~$(1,1,0)$ again but with offset~$(0,0,2)$ (instead of~$(0,0,3)$) and energy transfer.
As the corresponding intensity distribution in this setting has been unknown to us, we have had a scenario that required searching for signals of interest and thus a typical situation for the productive application of our approach in the future.
Note that we used our approach in the default setting (Table~\ref{tab:parameter_values_default}) for this scenario, \ie with an automated estimation of the background level ($\BackgrLevel=12$) and the intensity threshold ($\IntThresh=54$).
The results (Supplementary~Fig.~\ref{fig:eiger_scen3}) demonstrate again that, after initialization, a large part of measurement points is placed in the only region of signal.
In particular, although the signals vary greatly in magnitude, the measurement points are evenly distributed in this region, which is due to a well-estimated intensity threshold.

\cleardoublepage\phantomsection

%%%%%%%%%%%%%%%%%%%%%%%%%%%%%%%%%%%%%%%%%%%%%%%
\section*{Supplementary Note 6\quad Proof of Proposition~\ref{prop:small_var_limit}}
\addcontentsline{toc}{subsection}{Supplementary Note 6}
\label{si-sec:proof_prop}
%%%%%%%%%%%%%%%%%%%%%%%%%%%%%%%%%%%%%%%%%%%%%%%
\begin{lemma}
    For each~$y\in\R$, it holds that
    \begin{equation}
        \lim_{\sigma\to0^+} \frac{\exp(\sigma y) - \exp(\sigma^2/2)}{(\exp(\sigma^2)-1)^{1/2} \cdot \exp(\sigma^2/2)} = y.
    \end{equation}
\end{lemma}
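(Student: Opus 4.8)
The plan is to reduce the whole quotient to the elementary limit $\lim_{t\to0}(\exp(t)-1)/t=1$ by dividing both numerator and denominator by $\sigma$, which is legitimate since $\sigma>0$ throughout the one-sided limit. First I would handle the denominator. Writing $(\exp(\sigma^2)-1)^{1/2}/\sigma = \big((\exp(\sigma^2)-1)/\sigma^2\big)^{1/2}$ and substituting $t=\sigma^2$, the elementary limit gives $(\exp(\sigma^2)-1)/\sigma^2 \to 1$, so this factor tends to $1$. Since $\exp(\sigma^2/2)\to1$ as well, the denominator divided by $\sigma$ converges to $1$.

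Next I would treat the numerator divided by $\sigma$, split as $\frac{\exp(\sigma y)-1}{\sigma} - \frac{\exp(\sigma^2/2)-1}{\sigma}$. For the first term, assuming $y\neq0$, I factor $y$ out to write $y\cdot\frac{\exp(\sigma y)-1}{\sigma y}$, which tends to $y$ by the elementary limit with $t=\sigma y$; the case $y=0$ is trivial, since then the first term is identically $0=y$. For the second term, I write $\frac{\exp(\sigma^2/2)-1}{\sigma} = \frac{\sigma}{2}\cdot\frac{\exp(\sigma^2/2)-1}{\sigma^2/2}$, whose second factor tends to $1$ while the prefactor $\sigma/2\to0$, so the term vanishes. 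Hence the numerator divided by $\sigma$ converges to $y$, and combining the two parts the quotient converges to $y/1=y$.

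There is essentially no hard step here; the only care needed is to isolate the degenerate case $y=0$ before factoring $y$ out of the first term. An equivalent route would be to insert the Taylor expansions $\exp(\sigma y)=1+\sigma y+O(\sigma^2)$, $\exp(\sigma^2/2)=1+O(\sigma^2)$, and $\exp(\sigma^2)-1=\sigma^2+O(\sigma^4)$ directly, giving numerator $=\sigma y+O(\sigma^2)$ and denominator $=\sigma+O(\sigma^3)$, from which the limit $y$ is immediate; I would prefer the factorization route to keep the argument free of remainder-term bookkeeping.
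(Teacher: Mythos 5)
Your proof is correct and follows essentially the same route as the paper's: both reduce everything to the elementary limit $\lim_{t\to0}(\exp(t)-1)/t=1$ by cancelling a factor of $\sigma$, splitting the numerator into the $\exp(\sigma y)-1$ part (contributing $y$) and the $\exp(\sigma^2/2)-1$ part (vanishing like $\sigma/2$), and sending the denominator factor $\bigl((\exp(\sigma^2)-1)/\sigma^2\bigr)^{1/2}\exp(\sigma^2/2)$ to $1$. Your explicit isolation of the case $y=0$ is a small refinement the paper glosses over, since its intermediate expression $(\exp(\sigma y)-1)/(\sigma y)$ is undefined there.
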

\begin{proof}
    Let~$y\in\R$.
    Note that
    \begin{equation}
        \lim_{x\to0} \frac{\exp(x) - 1}{x} = 1
    \end{equation}
    using the Taylor expansion of~$\exp(x)$.
    With this in mind, we compute
    \begin{align}
      \frac{\exp(\sigma y) - \exp(\sigma^2/2)}{(\exp(\sigma^2)-1)^{1/2} \cdot \exp(\sigma^2/2)} &= \frac{ \frac{\exp(\sigma y) - 1}{\sigma y} \cdot \sigma y - \frac{\exp(\sigma^{2}/2)-1}{\sigma^{2}/2} \cdot \sigma^2/2} {\left(\frac{\exp(\sigma^2) - 1}{\sigma^2}\right)^{1/2} \cdot \sigma \cdot \exp(\sigma^2/2)}.
    \end{align}
    Cancelling $\sigma$ from both, the numerator and the denominator, gives
    \begin{equation}
        \begin{split}
            \lim_{\sigma\to0^+} \frac{\exp(\sigma y) - \exp(\sigma^2/2)}{(\exp(\sigma^2)-1)^{1/2} \cdot \exp(\sigma^2/2)} &= \lim_{\sigma\to0^+} \frac{ \frac{\exp(\sigma y) - 1}{\sigma y} \cdot y - \frac{\exp(\sigma^{2}/2)-1}{\sigma^{2}/2} \cdot \sigma/2} {\left(\frac{\exp(\sigma^2) - 1}{\sigma^2}\right)^{1/2} \cdot \exp(\sigma^2/2)} \\
            &= \frac{1\cdot y - 1\cdot0}{1\cdot1} \\
            &= y.
        \end{split}
    \end{equation}
\end{proof}

\begin{proof}[Proof of Proposition~\ref{prop:small_var_limit}]
    First, we compute
    \begin{align}
        \frac{Z}{\sqrt{\Var{Z}}} &= \exp\left( \log\left( \frac{1}{\sqrt{\Var{Z}}} \right) +\mu + \sigma \eta \right) \\
        &= \exp\left( -\frac{1}{2}\log(\Var{Z}) + \mu + \sigma \eta \right) \\
        &= \exp\left( -\frac{1}{2}[\log(\exp(\sigma^2)-1) + 2\mu + \sigma^2] + \mu + \sigma \eta \right) \\
        &= \exp\left( -\frac{1}{2}\log(\exp(\sigma^2)-1) - \frac{\sigma^2}{2} + \sigma \eta \right) \\
        &= \frac{1}{(\exp(\sigma^2)-1)^{1/2}} \cdot \exp\left( -\frac{\sigma^2}{2} +\sigma \eta \right) \\
        &= \frac{\exp(\sigma \eta)}{(\exp(\sigma^2)-1)^{1/2} \cdot \exp(\sigma^2/2)}
    \end{align}
    and
    \begin{align}
        \frac{\E{Z}}{\sqrt{\Var{Z}}} &= \frac{\exp\left( \mu+\frac{\sigma^2}{2} \right)}{(\exp(\sigma^2)-1)^{1/2} \cdot \exp(\mu+\sigma^2/2)} \\
        &= \frac{\exp(\sigma^2/2)}{(\exp(\sigma^2)-1)^{1/2} \cdot \exp(\sigma^2/2)}
    \end{align}
    yielding
    \begin{equation}
        \label{eq:Znorm_explic}
        \overline{Z} = \frac{\exp(\sigma \eta) - \exp(\sigma^2/2)}{(\exp(\sigma^2)-1)^{1/2} \cdot \exp(\sigma^2/2)}.
    \end{equation}
    
    Applying the lemma above to Eq.~\eqref{eq:Znorm_explic} for~$y=\eta(\omega)$ yields the result.
\end{proof}

As a corollary of this pointwise convergence, the distribution of~$\overline{Z}$ converges to a standard normal distribution (as~$\sigma\to0^+$).
Note, however, that convergence in distribution can also be proven for sums of log-normally distributed random variables, in contrast to pointwise convergence~\cite{dufresne2004log}.

\cleardoublepage\phantomsection

%%%%%%%%%%%%%%%%%%%%%%%%%%%%%%%%%%%%%%%%%%%%%%%
\section*{Supplementary Figures}
\addcontentsline{toc}{subsection}{Supplementary Figures}
\label{si-sec:suppl_figures}
%%%%%%%%%%%%%%%%%%%%%%%%%%%%%%%%%%%%%%%%%%%%%%%
\begin{figure}[H]
	\centering
	\includegraphics[width=\linewidth]{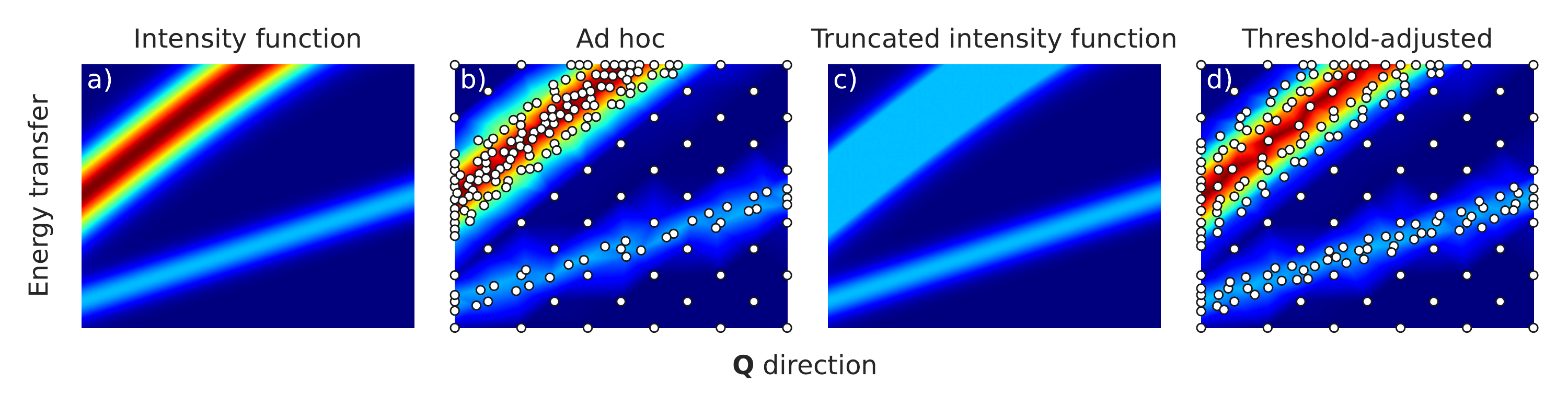}
	\caption{Effect of intensity threshold.
	    a) Intensity function~$i$ with two signal regions of different intensity magnitudes.
		b) An ad hoc use of corresponding intensity observations for conditioning the log-Gaussian process might place a large part of measurement points (dots) in the high intensity region.
		c) To solve this problem, the intensity observations are truncated to a threshold~$\IntThresh>0$ such that the process can be thought of as fitted to~$\min\lbrace i(\x),\IntThresh \rbrace$.
		d) Using threshold-adjusted intensity observations yields measurement points that are more evenly distributed among the two different signal regions.
		Note that~b) and~d) contain the same number of 200 measurement points.}
	\label{fig:intens_thresh}
\end{figure}

\cleardoublepage\phantomsection

\begin{figure}[H]
	\centering
	\includegraphics[width=\linewidth]{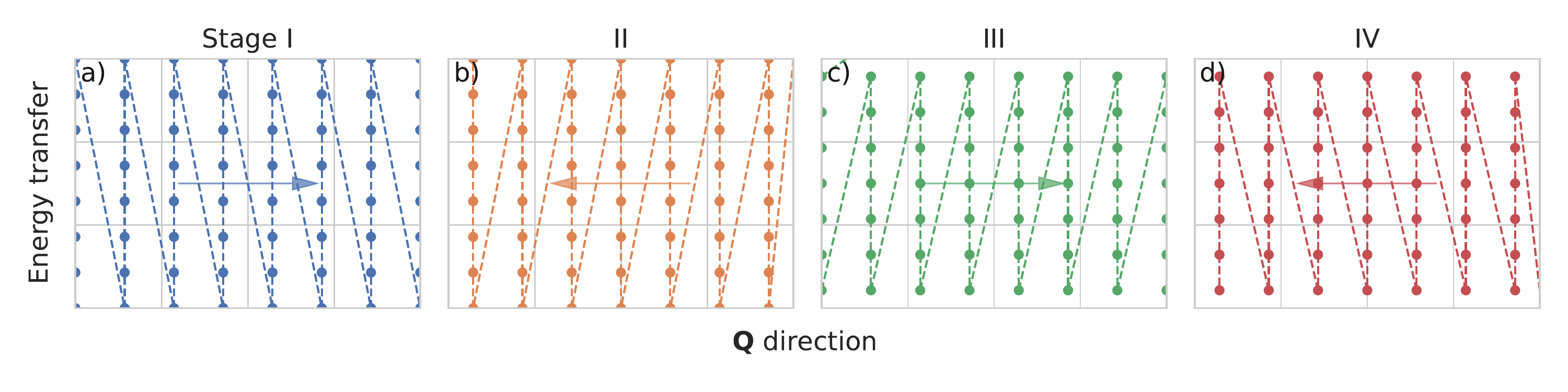}
	\caption{Four stages~I-IV of the grid approach.
		The arrows indicate the order of intensity observations (dots) in each stage (a-d).}
	\label{fig:base_grid_experiment_path}
\end{figure}

\cleardoublepage\phantomsection

\begin{figure}[H]
	\centering
	\includegraphics[width=\linewidth]{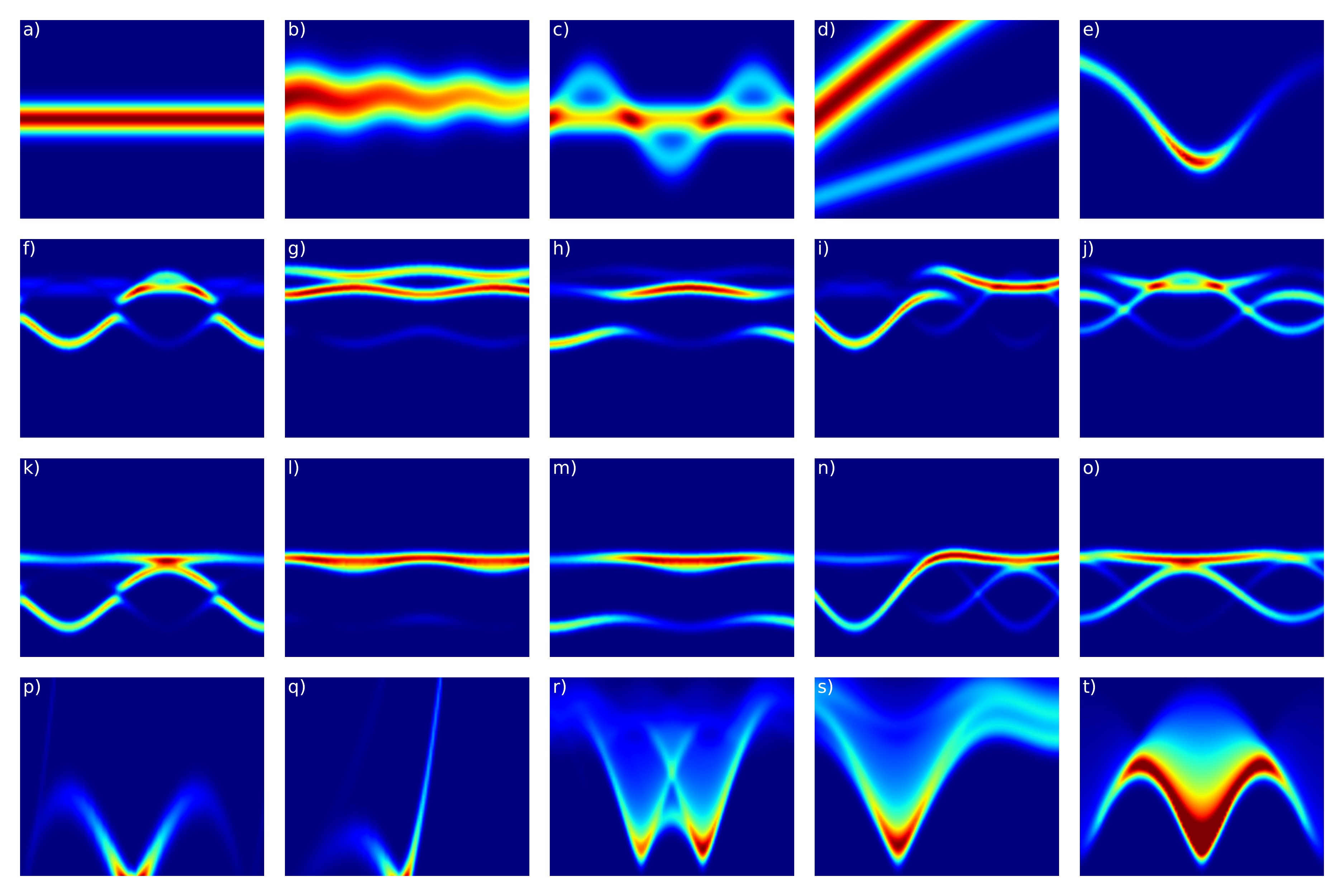}
	\caption{Intensity functions of benchmark test cases.
		a),b) Crystal field excitations.
		c) Superposition of a crystal field excitation and a phonon.
		d) Two separate regions of signal with different intensity magnitudes.
		e) Transverse optical phonon in SnTe~\cite{heid1999linear,li2022anomalous}.
		f)-o) Spin wave spectrum of Yb$_2$Ti$_2$O$_7$ (see tutorial~20 of SpinW~\cite{toth2015linear}).
		p),q) Spin waves in ZnCr$_2$Se$_4$~\cite{tymoshenko2017pseudo,inosov2020magnetic}.
		r),s) Spin waves in FeP~\cite{sukhanov2022frustration}.
		t) Spinon continuum in SrCo$_2$V$_2$O$_8$~\cite{bera2017spinon}.}
	\label{fig:base_test_cases}
\end{figure}

\cleardoublepage\phantomsection

\begin{figure}[H]
	\centering
	\includegraphics[width=\linewidth]{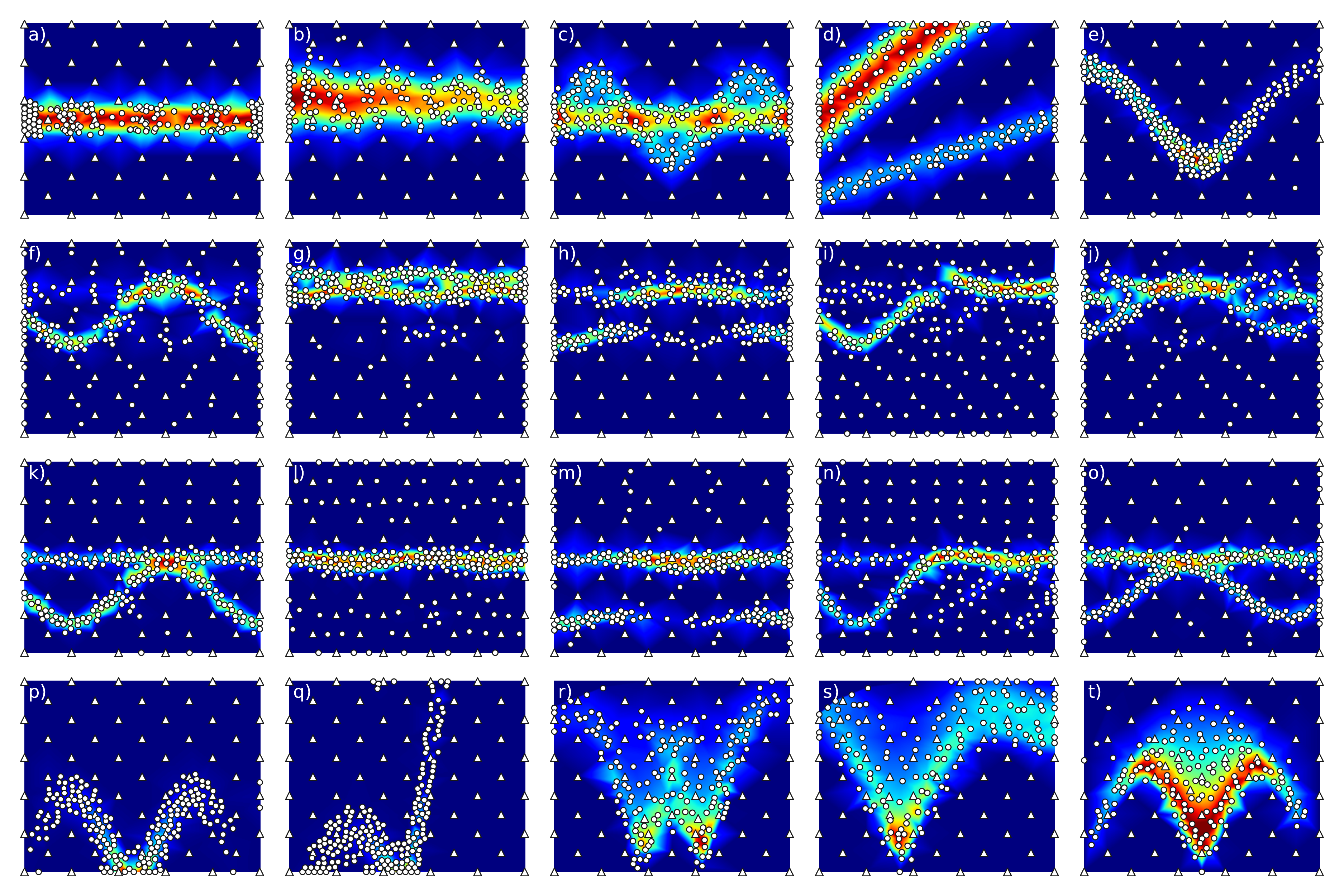}
	\caption{Examples of particular experiments performed by our approach for each benchmark test case.
	    a)-t) refer to intensity functions in Supplementary Fig.~\ref{fig:base_test_cases}.
		Triangles represent the initialization grid and dots show locations of intensity observations autonomously placed by our approach.
		After initialization, measurement points are mainly placed in regions of signal for each test case.}
	\label{fig:base_experiments_ariane}
\end{figure}

\cleardoublepage\phantomsection

\begin{figure}[H]
	\centering
	\includegraphics[width=0.75\linewidth]{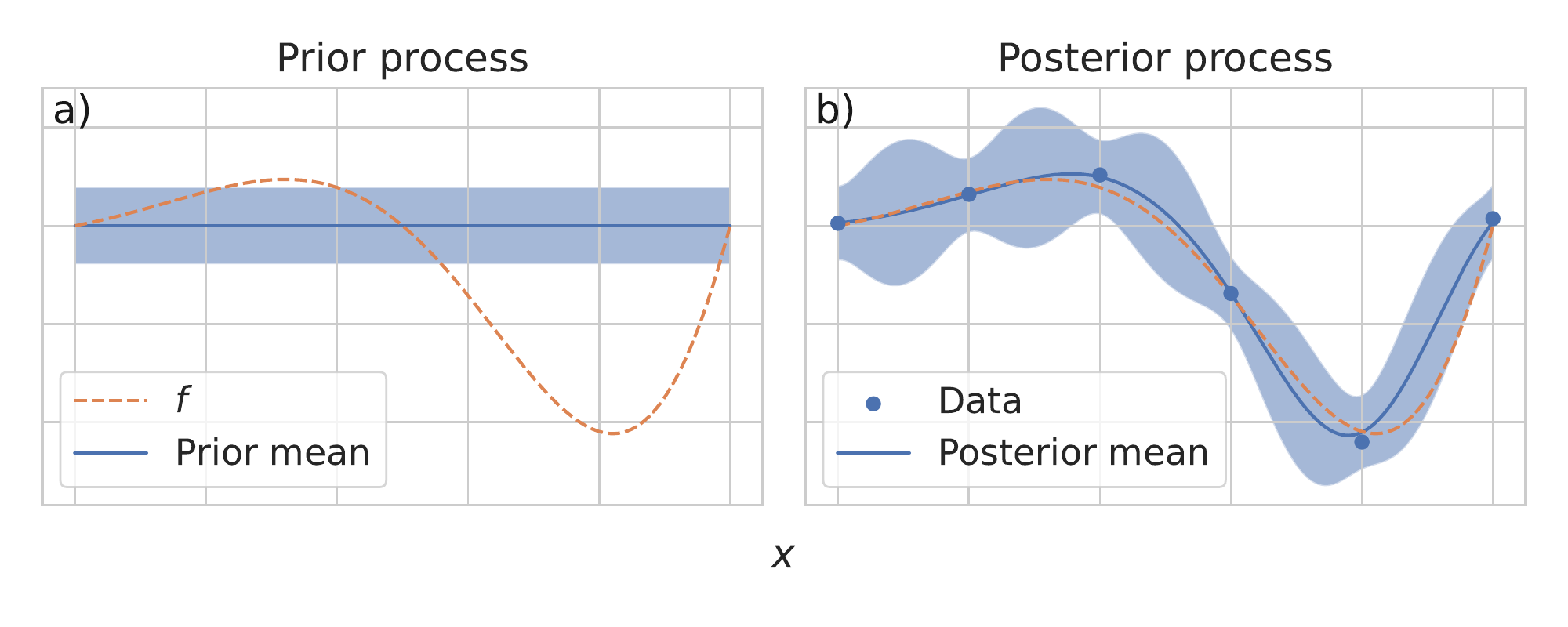}
	\caption{Transition from a prior to a posterior process.
	    The example function~$f(x)=\sin(2\pi x) \exp(3 x)$ defined on~$[0,1]$ is depicted by a dashed orange line.
	    a) Prior process with its mean function (solid blue line) and its uncertainties (light blue area) as a 95\% credible region between the 2.5\% and 97.5\% quantile.
	    b) Posterior process with noisy data points of~$f$ (blue dots).}
	\label{fig:gpr_demo}
\end{figure}

\cleardoublepage\phantomsection

\begin{figure}[H]
	\centering
	\includegraphics[width=\linewidth]{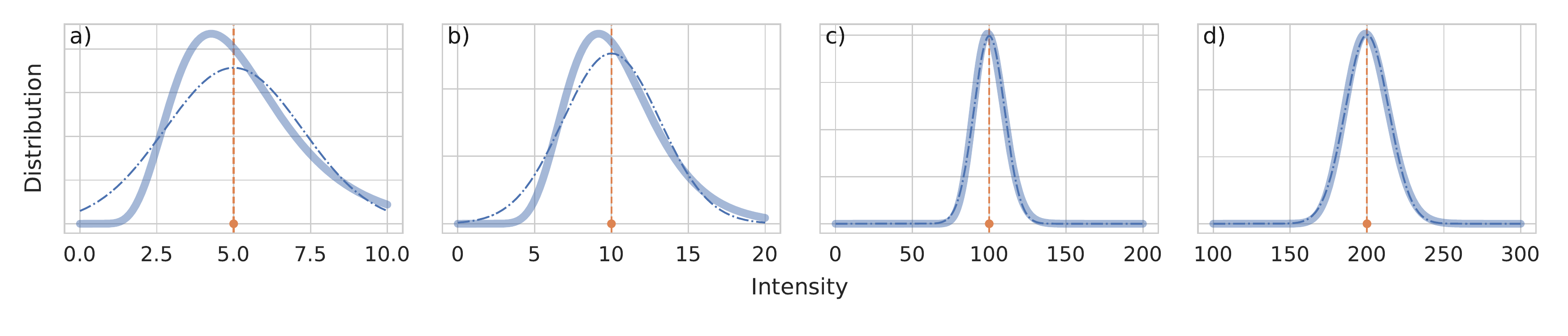}
	\caption{Convergence of log-normal noise distribution.
	    For increasing intensities~$i(\x_j)\in\lbrace \textnormal{$5$ (a)},\textnormal{$10$ (b)},\textnormal{$100$ (c)},\textnormal{$200$ (d)} \rbrace$ (orange dots and orange dashed lines), the noise random variable $\hat{I}(\x_j) \| I(\x_j)=i(\x_j)$ (light blue stripe) converges in distribution to $I^+(\x_j) \| I(\x_j)=i(\x_j)$ (dashed blue line), \ie to a normal distribution.}
	\label{fig:lgp_noise}
\end{figure}

\cleardoublepage\phantomsection

\begin{figure}[H]
	\centering
	\includegraphics[width=0.6\linewidth]{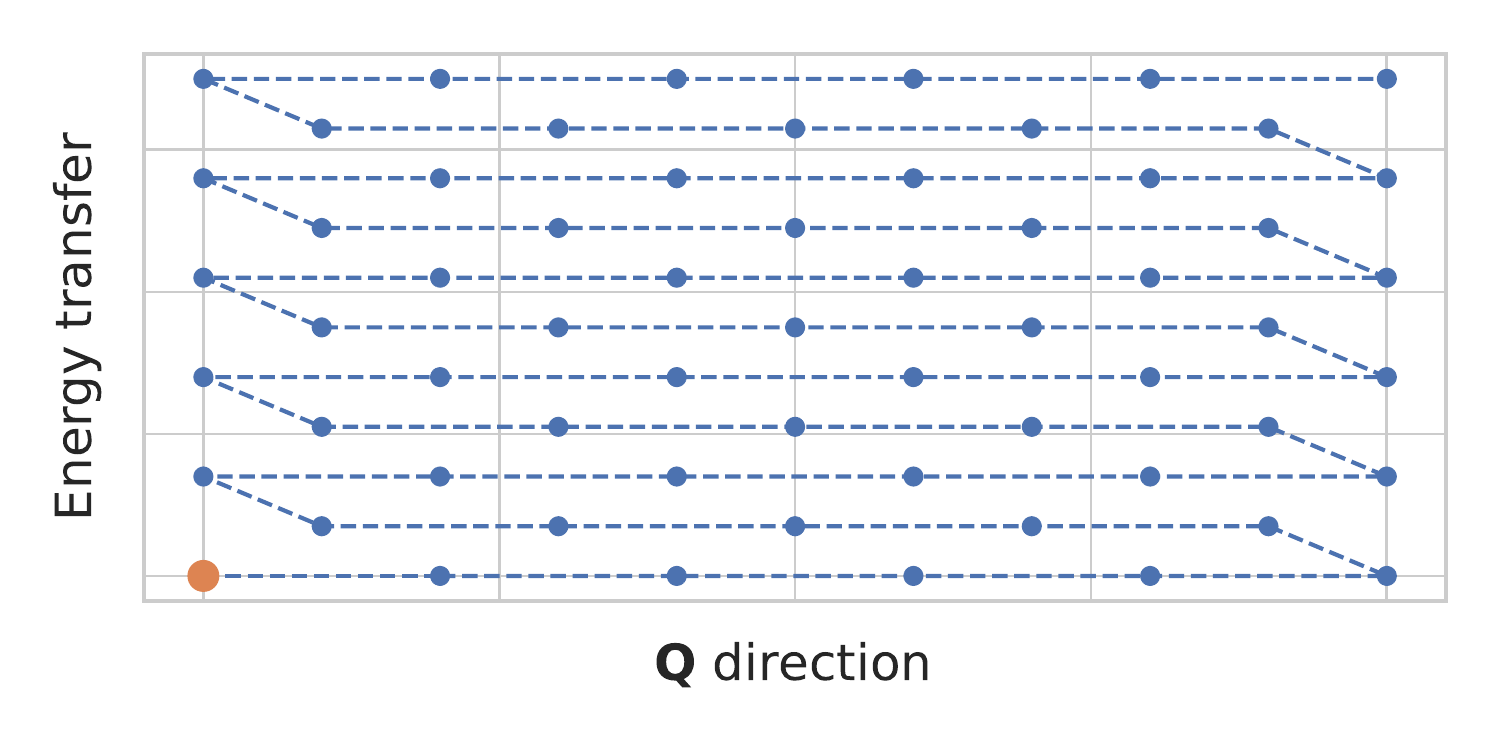}
	\caption{Initial measurement locations arranged as a certain grid.
		Observations (blue dots) start at the bottom left corner (orange dot) and then continue row by row (dashed blue line).}
	\label{fig:init_locs}
\end{figure}

\cleardoublepage\phantomsection

\begin{figure}[H]
	\centering
	\includegraphics[width=0.75\linewidth]{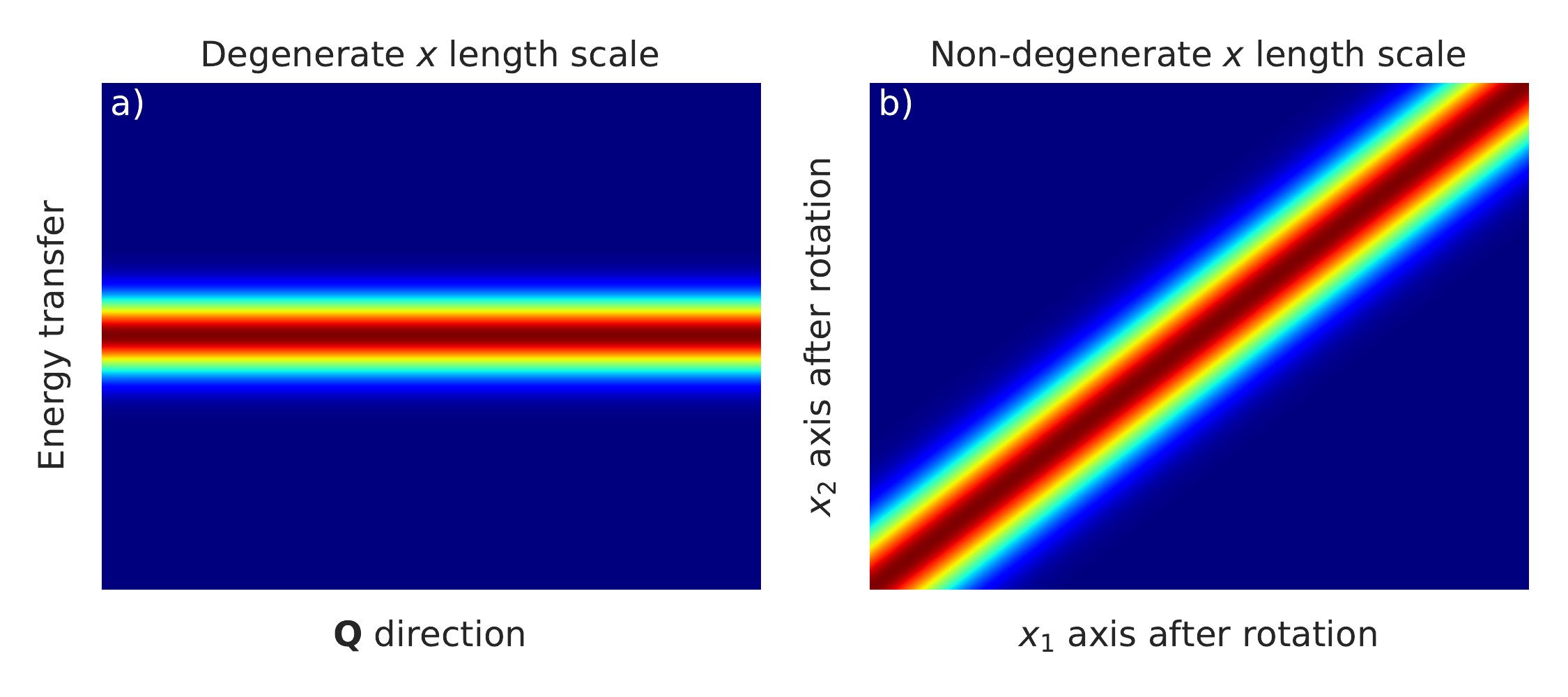}
	\caption{Example of degenerate and non-degenerate length scale hyperparameters.
	    a) Lower-dimensional intensity function yielding a degenerate length scale hyperparameter~$\lambda_1=\infty$.
		b) The same intensity function getting full-dimensional after a rotation of the coordinate system by~$45^\circ$, thus allowing non-degenerate length scales~$\lambda_k<\infty$.}
	\label{fig:rotat}
\end{figure}

\cleardoublepage\phantomsection

\begin{figure}[H]
    \centering
    \includegraphics[width=\linewidth]{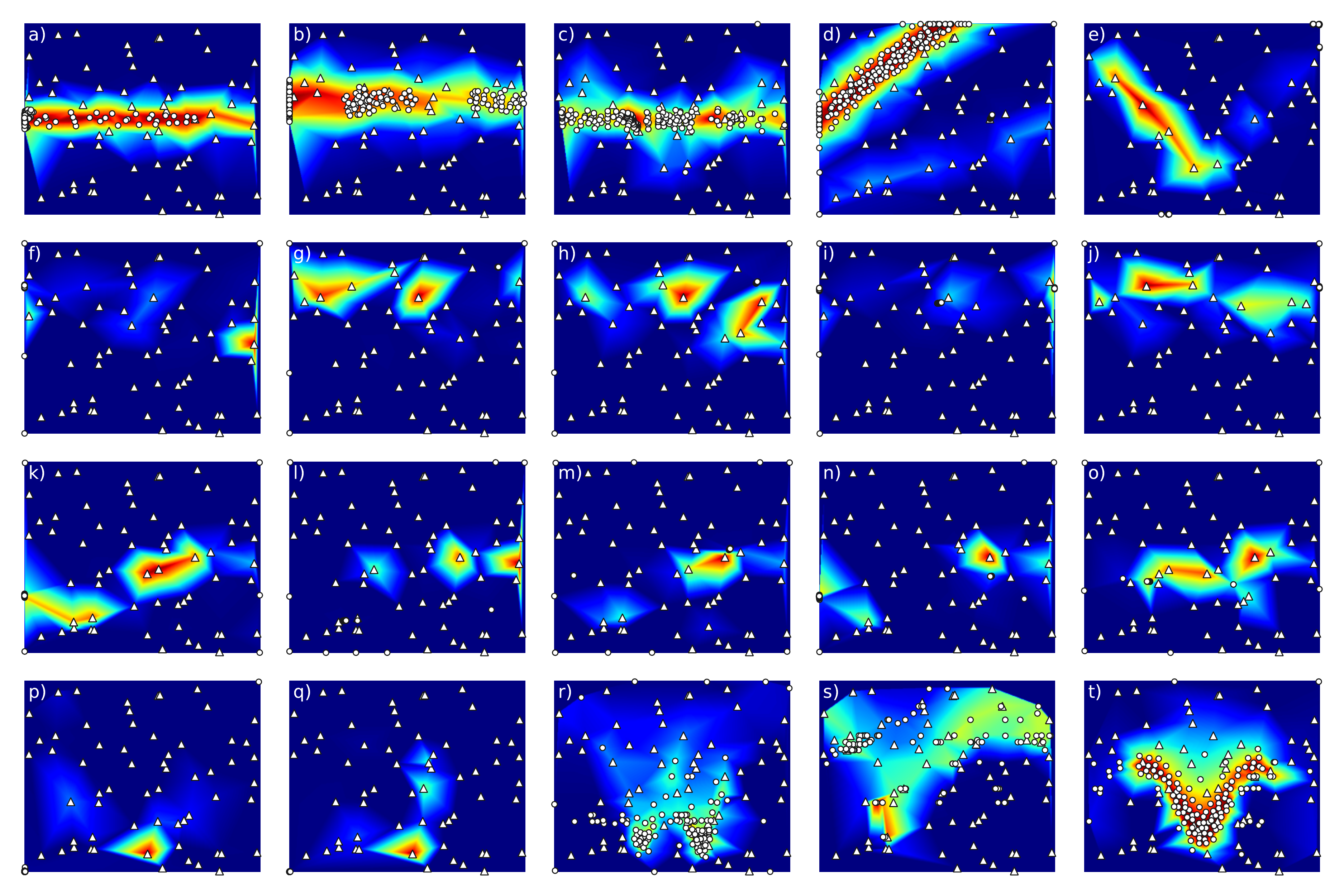}
    \caption{Examples of particular experiments performed by \gpcam in the default setting for each benchmark test case.
        a)-t) refer to intensity functions in Supplementary Fig.~\ref{fig:base_test_cases}.
        Triangles represent initial measurement locations and dots show those autonomously placed by \gpcam.
		In most test cases, the quality of experiments is rather poor.}
    \label{fig:base_experiments_gpcam}
\end{figure}

\cleardoublepage\phantomsection

\begin{figure}[H]
    \centering
    \includegraphics[width=\linewidth]{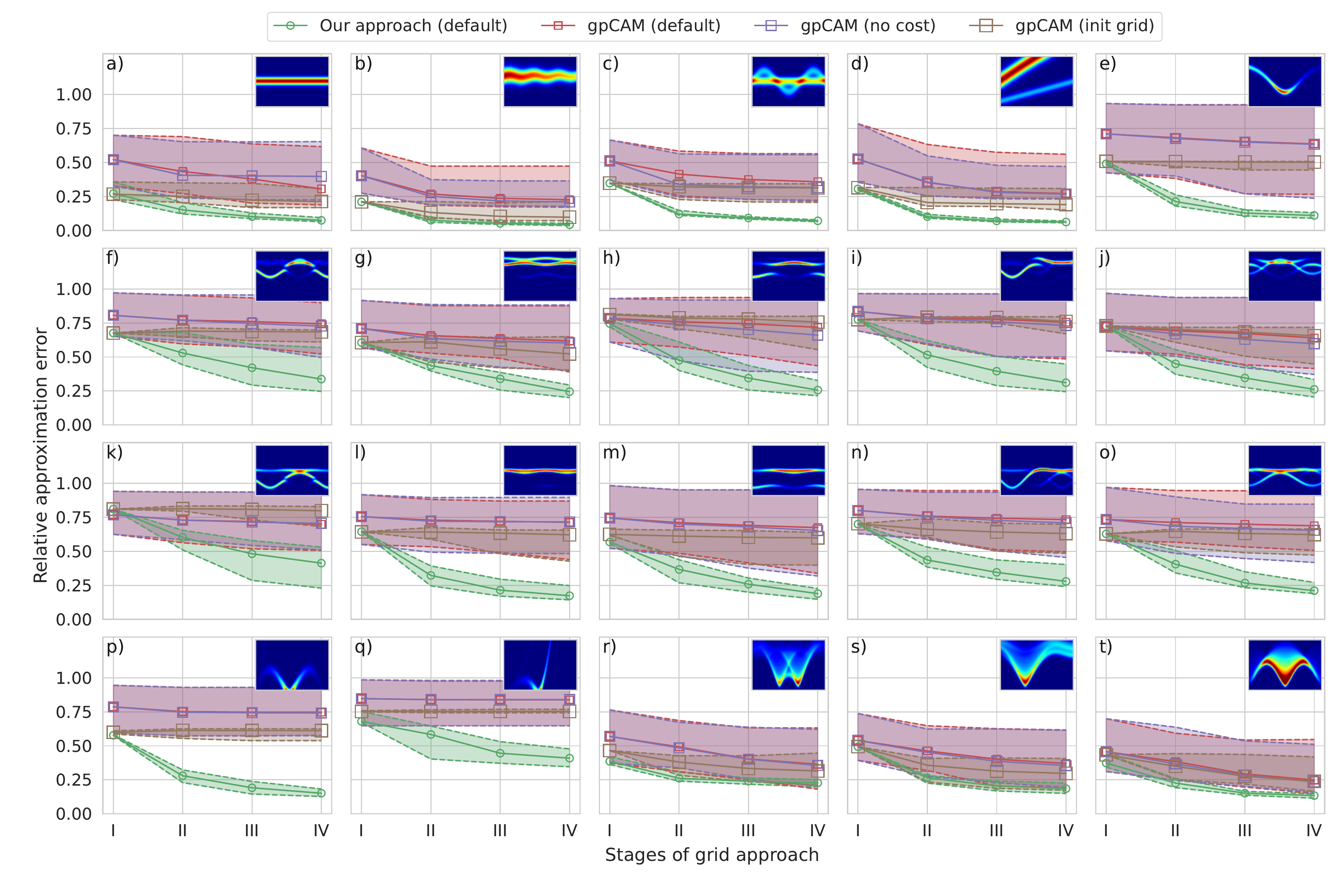}
    \caption{Benchmark results of \gpcam.
        For each approach, every subfigure (a-t) plots the decay of a relative approximation error (Eq.~\eqref{eq:def_benefit_measure}) between the target intensity function of the corresponding test case (top right corners) and a linear interpolation of collected intensity observations for four milestone values (symbols) which are determined by the four stages~(I-IV) of the grid approach.
        The solid lines show medians of the resulting benefit values, whereas the light color areas with dashed boundaries indicate the range between their minimum and maximum to visualize their variability caused by stochastic elements.
        Our approach (green) performs significantly better than all \gpcam variants, both in terms of median average and variability of the results.}
    \label{fig:base_results_ariane_gpcam}
\end{figure}

\cleardoublepage\phantomsection

\begin{figure}[H]
	\centering
	\includegraphics[width=\linewidth]{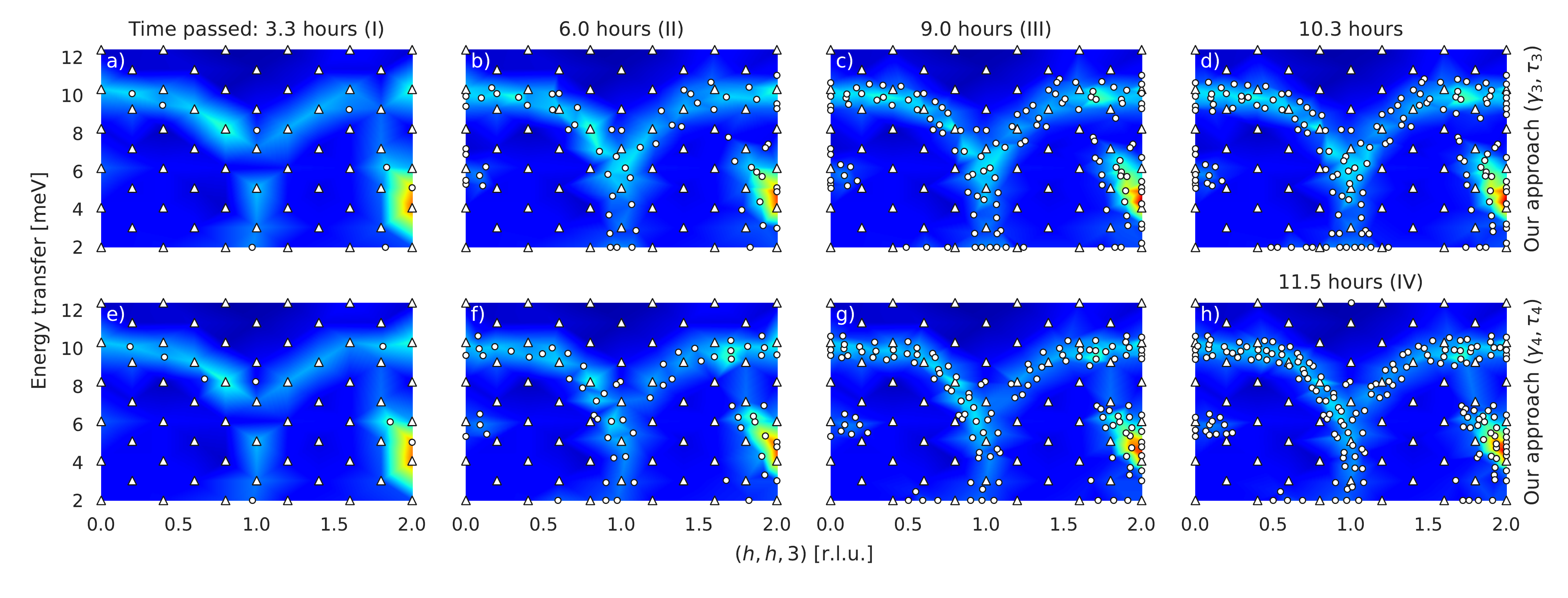}
	\caption{Further results for scenario~1 in two additional settings of our approach.
	    For the $\Q$~direction, we use relative lattice units (r.l.u.).
	    The columns again indicate the four stages~(I-IV) of the grid approach (see top row in Fig.~\ref{fig:eiger_scen1_compar}).
	    However, note that the last column of the top row is related to a different total experimental time since the corresponding experiment did not reach the final time of stage~IV.
	    Triangles again represent the initialization grid and dots show locations of intensity observations autonomously placed by our approach.
		The top row (a-d) corresponds to~$\BackgrLevel_3=45$ and~$\IntThresh_3=90$ and the bottom row (e-h) to~$\BackgrLevel_4=\BackgrLevel_3$ and~$\IntThresh_4=130$.}
	\label{fig:eiger_scen1_compar_2022-05}
\end{figure}

\cleardoublepage\phantomsection

\begin{figure}[H]
	\centering
	\includegraphics[width=0.75\linewidth]{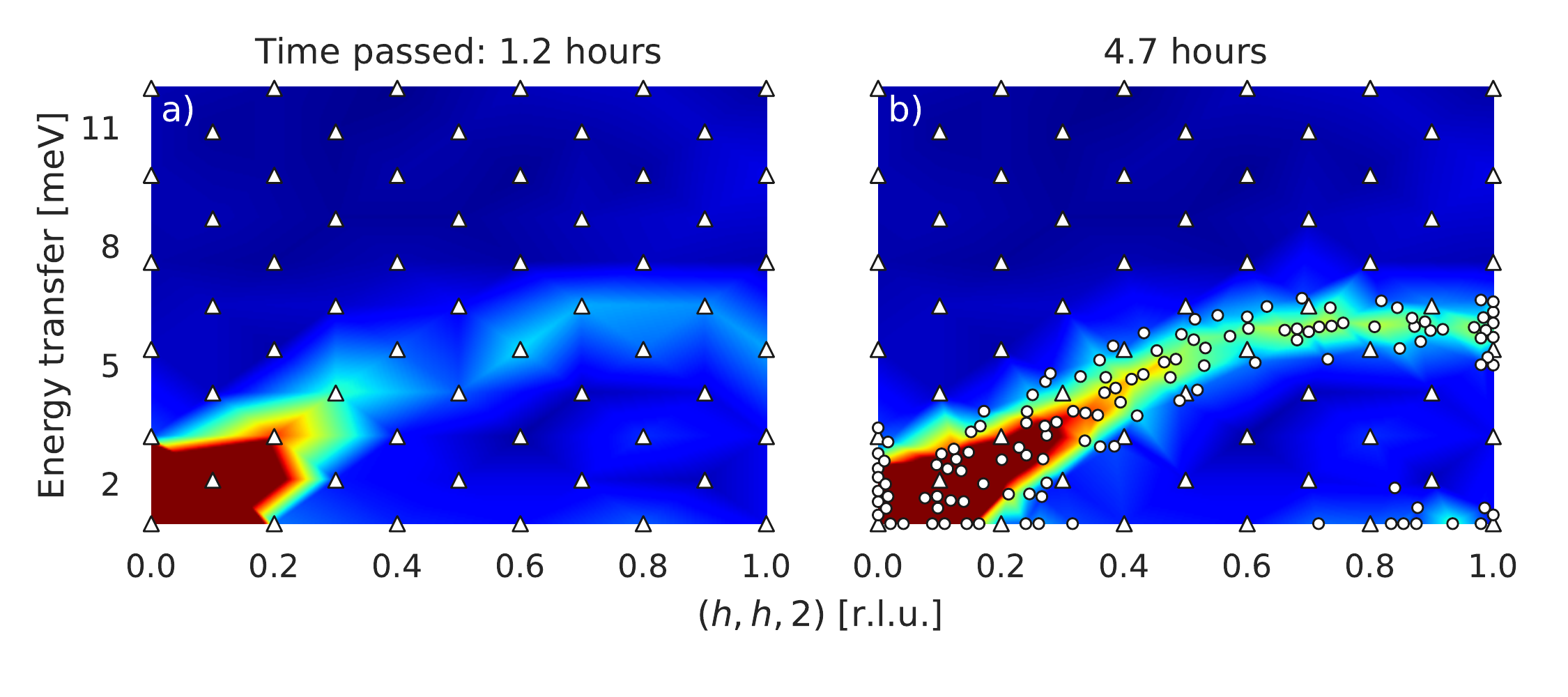}
	\caption{Results for scenario~3.
		For the $\Q$~direction, we use relative lattice units (r.l.u.).
		a) Initial measurement points (triangles).
		b) Intensity observations (dots) autonomously placed after initialization.}
	\label{fig:eiger_scen3}
\end{figure}

\end{document}